\newcommand{\citep}[1]{\cite{#1}}
\newcommand{\citet}[1]{\cite{#1}}
\newif\ifshowcomments
\newcommand{\docomment}[3]{{\ifshowcomments \textcolor{#1}{[ #2 : #3 ]} \fi}}
\newcommand{\bo}[1]{\docomment{brown}{Bo}{#1}}
\newcommand{\robin}[1]{\docomment{blue}{Robin}{#1}}
\newcommand{\elias}[1]{\docomment{purple}{Elias}{#1}}
\newtheorem{theorem}{Theorem}[section]
\newtheorem*{theorem*}{Theorem}
\newtheorem{lemma}{Lemma}[section]
\newtheorem*{lemma*}{Lemma}
\newtheorem{proposition}{Proposition}[section]
\newtheorem*{prop*}{Proposition}
\newtheorem{corollary}{Corollary}[section]
\newtheorem*{corollary*}{Corollary}
\newtheorem{observation}{Observation}[section]
\theoremstyle{definition}\newtheorem{definition}{Definition}[section]
\theoremstyle{definition}\newtheorem*{definition*}{Definition}
\theoremstyle{definition}\newtheorem{reduction}{Reduction}
\theoremstyle{definition}\newtheorem{algorithm}{Algorithm}[section]
\DeclareMathOperator*{\E}{\mathbb{E}}
\DeclareMathOperator*{\argmax}{argmax}
\newcommand{\reals}{\mathbb{R}}
\newcommand{\Indic}[1]{\mathbbm{1}\mathopen{}\left[#1\right]}
\newcommand{\A}{\mathbb{A}}
\newcommand{\F}{\mathcal{F}}        
\newcommand{\PF}{\mathcal{P}_{\F}}  
\newcommand{\D}{\mathcal{D}}		
\newcommand{\M}{\mathcal{M}}        
\newcommand{\C}{\mathcal{C}}
\newcommand{\ALG}{\mathrm{ALG}}
\newcommand{\OPT}{\mathrm{OPT}}
\newcommand{\OPTHAT}{\mathrm{\widehat{OPT}}}
\newcommand{\Perf}{\mathrm{Perf}}			
\newcommand{\Perfofon}[2]{\Perf^{#1}(#2)}	
\newcommand{\Welf}{\mathrm{Welf}}			
\newcommand{\Welfofon}[2]{\Welf^{#1}(#2)}
\newcommand{\SAUP}{\textsc{SAUP}}
\newcommand{\I}{\mathcal{I}} 
\newcommand{\bI}{\mathbb{I}} 
\newcommand{\cms}{\textsc{CMS}}
\newcommand{\msp}{\textsc{MSP}}
\newcommand{\cob}{\textsc{COB}}
\newcommand{\cobs}{\cob{}\ensuremath{\mathrm{~Selection}}}
\newcommand{\cor}{\textsc{COR}}
\newcommand{\cors}{\cor{}\ensuremath{\mathrm{~Selection}}}
\newcommand{\vars}{\ensuremath{\mathrm{Variable~Selection}}}
\newcommand{\MSPtoCOB}{\ensuremath{\mathrm{MSPtoCOB}}}
\newcommand{\COBtoCOR}{\ensuremath{\mathrm{COBtoCOR}}}
\newcommand{\pnoi}{\textsc{PNOI}}
\renewcommand{\mp}{\textsc{matroid-prophet}}
\newcommand{\Val}{\ensuremath{\mathrm{Val}}}
\begin{document}

\begin{titlepage}

\title{Combinatorial Markov Search}

\author{Robin Bowers}
\author{Elias Lindgren}
\author{Bo Waggoner}
\affil{University of Colorado, Boulder}
\date{}

\maketitle

\robin{make sure comments are off!}

A decisionmaker faces $n$ alternatives, each of which represents a potential reward. 
After investing costly resources into investigating the alternatives, the decisionmaker selects one (or more generally a feasible subset), and receives the associated reward(s). 
We model each alternative as a $\textit{Markov Search Process}$ (MSP), a type of undiscounted Markov Decision Process on a finite acyclic graph, and call this problem $\textit{Combinatorial Markov Search}$ (CMS). 
CMS broadly generalizes recent NP-hard problems of interest such as Pandora's Box with nonobligatory inspection. 
Despite the seemingly adaptive and interactive nature of the problem, we construct online algorithms for CMS that explore each alternative sequentially, either selecting or discarding it before moving to the next.
We first show that any ex-ante prophet inequality can be converted into an (inefficient) online algorithm for CMS with the same approximation guarantee.
Then, for any matroid feasibility constraint, we construct a polynomial-time $(1/2-\epsilon)$-approximation algorithm for CMS. 
Our construction also implies incentive-compatible mechanisms with constant Price of Anarchy for a strategic version of the problem that generalizes auctions with inspection costs.

\end{titlepage}

\section{Introduction} \label{sec:intro}

In much recent work growing out of the Pandora's Box literature~\citep{beyhaghi2024recent}, a decisionmaker is presented with a set of alternatives of uncertain value.
She must dynamically acquire costly information and choose a valuable alternative or subset of alternatives.
We propose a general model for this problem in which the decisionmaker's $n$ alternatives $\M_1,\dots,\M_n$ are each a \emph{Markov Search Process ($\msp$)}, a variant of a Markov Decision Process.\footnote{A $\msp$ has a finite directed acyclic graph structure; no discount factor; actions only generate costs (not rewards); and has terminal states that are associated with an available reward.}
Each $\msp$ $\M_i$ is defined by a choice of actions, each associated with a cost and a stochastic transition to a new state; the new state has a new choice of costly actions, and so on until a terminal state is reached, revealing an available reward for choosing alternative $\M_i$.

In the \emph{Combinatorial Markov Search ($\cms$)} problem, the decisionmaker is given a full description of $\M_1,\dots,\M_n$.
She interacts by repeatedly selecting any $\M_i$, taking an action, paying the incurred cost, and observing its state transition.
She then selects another $\M_i$ (or possibly the same one), and so on.
At any point, she may decide to stop and claim some subset of the $\msp$s, obtaining their associated revealed rewards. 
The accepted subset $F$  must belong to a given feasibility system $\F \subseteq 2^{\{1,\dots,n\}}$, such as a matroid, and we call the associated problem $\cms(\F)$.
For example, in \emph{single selection}, $\F = \{\emptyset, \{1\},\dots,\{n\}\}$, denoting that at most one of the $\msp$s can be claimed.
The \emph{welfare} of the algorithm is the sum of rewards claimed minus the sum of all costs incurred.
The algorithm is an $\alpha$-approximation if its expected welfare on every instance is at least an $\alpha$ fraction of the expected welfare of the optimal, fully-adaptive, computationally unconstrained algorithm.

$\cms$ models problems such as a company investing in the development of its next product, which may be one of $n$ potential options.
Each option is a Markov Search Process involving research and development choices that require resources and may open the door to further choices.
The company may interactively investigate and develop the potential options in parallel, e.g. abandoning some when they become less promising, and eventually choosing one to bring to market.

\elias{this para edited}
It is known that approximation algorithms for combinatorial optimization can often be extended to Pandora's Box models and related tractable models of information acquisition such as multi-stage Pandora's Box~\citep{guha2007information,singla2018price,gupta2019markovian,esfandiari2019online}.
Recent work has begun to consider \emph{intractable} models, such as \emph{Pandora's Box with non-obligatory inspection}~\citep{doval2018whether,fu2023pandora,beyhaghi2023pandora}, which is NP-hard even for single selection.
Our work vastly generalizes such models (see Appendix \ref{app:hardness}), and we show that any problem that can be represented within our $\cms$ framework admits an efficient approximation algorithm (Theorem \ref{thm:main-2}).
Previously, efficient approximations were generally known only for very specific special cases of $\cms$.
\elias{such as nonobligatory? TODO what other known problems do we capture, with or without previous approximations?
previously also said ``no negative results on approximability have been shown'', but it's clear why since we prove approximability for all of them}
Variants of this question have been studied both classically~\citep{nash1973optimal,glazebrook1982sufficient,whittle1980multi} and in recent~\citep{ma2018improvements} and independent~\citep{chawla2025combinatorial} work (see Section \ref{sec:related}).
\elias{anything to say about what we contribute to these variants? Not a lot for settings w/o stopping time (classical results we cite here)}

\subsection{Main Results}

Combinatorial Markov Search appears to require highly interactive, adaptive algorithms.
We first ask whether this is truly the case by considering a restricted class of \emph{online} algorithms\footnote{We emphasize that, as with prophet inequalities, our online algorithms are given the full problem description in advance.} as approximations to the optimal offline algorithm. 
These algorithms must fully interact with $\M_i$ and ``take it or leave it'' before beginning any interactions with $\M_{i+1}$. 
In the special case of our problem where each $\M_i$ has no actions at all and simply reveals a stochastic reward, existence of an online algorithm with approximation guarantee $\alpha$ is known as an \emph{$\alpha$ prophet inequality}.

Our first main result is that a prophet inequality for feasibility constraint $\F$ can be amplified into an online algorithm for $\cms(\F)$ with the same approximation guarantee.

\begin{theorem}[Online Approximation] \label{thm:main-1}
  Let $\F$ be a nonempty downward-closed set system.
  If there is an ex-ante\footnote{An approximation guarantee is \emph{ex-ante} if the benchmark is a fully-adaptive optimal algorithm that is only required to satisfy the feasibility constraints in expectation. The approximation algorithm is still required to always be feasible. See Section \ref{sec:prelims}.} $\alpha$ prophet inequality for $\F$, then there is an online $\alpha$-approximation algorithm, not necessarily computationally efficient, for Combinatorial Markov Search$(\F)$.
\end{theorem}
For single selection, the optimal prophet inequality is $\tfrac{1}{2}$~\citep{krengel1978semiamarts,samuelcahn1984comparison}, meaning that the ``fully-adaptive'' algorithm that simply picks the largest of $n$ rewards can perform at most twice as well as an online algorithm.
The $\tfrac{1}{2}$ guarantee extends to matroid feasibility constraints $\F$ even against the ex-ante benchmark~\citep{lee2018optimal}, so Theorem \ref{thm:main-1} implies that even for matroids, even when all alternatives may be highly dynamic Markov Search Processes involving sequences of costly decisions and stochastic transitions, the benefit of sharing information across $\msp$s is no higher: the benefit of full adaptivity is still only a factor of two.
Theorem \ref{thm:main-1} is similar in spirit to an \emph{adaptivity gap} in stochastic probing problems (e.g. \citet{gupta2016algorithms}).

The approximation algorithms for $\cms$ in Theorem \ref{thm:main-1} are not computationally efficient. 
By substituting our own ``matroid prophet inequality'' and designing additional efficient subroutines, we obtain our second main result.
\begin{theorem}[Efficient Approximation] \label{thm:main-2}
  Let $\F$ be a matroid.
  There exists an algorithm, running in time polynomial in the input size and $\frac{1}{\epsilon}$, that for any $\epsilon > 0$ achieves a $(\tfrac{1}{2}-\epsilon)$-approximation for Combinatorial Markov Search$(\F)$.
  Moreover, the algorithm is online.
\end{theorem}
\elias{edits in this para}
We thus show that there exists an efficient constant-factor approximation algorithm for $\cms$, not only for single selection but also for any matroid constraint.
\elias{TODO if we know of any special cases for which it provides/improves approximation factor, mention here.
Could mention nonobligatory but seems insufficient on its own especially as we don't improve approx factor.}
Furthermore, the efficient algorithm is ``non-adaptive'' in that it interacts with the MSPs one at a time online.

\paragraph{Mechanism Interpretation.}
It is well-known that prophet inequalities are \emph{incentive-compatible (IC)}: they can be converted to mechanisms for e.g. selling items to arriving buyers~\citep{lucier2017economic}.
The thresholds used by prophets algorithms are interpreted as posted prices, to which the buyers best-respond, implementing the algorithm.

We define the $\cms(\F)$-Auction problem where an $\F$-feasible subset of buyers may be selected as winners, and each buyer interacts privately with their own $\msp$ to discover their value for winning.
Notably, our generic reduction in Theorem \ref{thm:main-1}, despite beginning with an IC prophet inequality, converts it to an algorithm that is \emph{not} IC. 
However, an essential component of our efficient algorithm for Theorem \ref{thm:main-2} is incentive compatibility.
The algorithm behaves on each $\msp$ as if it were an agent attempting to maximize the profit from that $\msp$ alone subject to a posted price.
Converting the algorithm to a mechanism yields the following result.

\begin{corollary} \label{cor:main-poa}
  If $\F$ is a matroid, there exists a mechanism for the $\cms(\F)$-Auction problem with Price of Anarchy $\tfrac{1}{2}$.
  Furthermore, for any $\epsilon > 0$, there exists a polynomial-time mechanism for the $\cms(\F)$-Auction problem with Price of Anarchy $\tfrac{1}{2}-\epsilon$.
\end{corollary}
In subsequent work, \citet{chawla2025commitment} use a \emph{commitment gap} approach to improve the guarantee to $1-1/e$.

\subsection{Techniques}
In this section, we provide a high-level roadmap of the primary techniques used in this work.

\paragraph{Ex-Ante Relaxation.}
Recall that the set of MSPs we may select is constrained, typically according to a matroid constraint.
We compare our \emph{ex-post} feasible algorithm to the optimal \emph{ex-ante} feasible algorithm (Definition \ref{def:ex-ante}) as a benchmark.
The ex-ante benchmark may interact with each MSP independently, subject to some fixed probability of claiming.
As any ex-post feasible selection is also ex-ante feasible, this provides an upper bound on the value of any optimal algorithm for ex-post CMS.

\paragraph{Online Algorithms.}
The core technique of this work is leveraging constant-factor approximations given by prophet inequalities to obtain the same approximation factor for CMS.
In the classic prophet inequality setting, a decisionmaker must select items from a set of many alternatives of uncertain value.
Simple thresholding techniques allow us to approximate the optimal solution when items must be considered one at a time~\citep{samuelcahn1984comparison}.
These techniques are robust to replacing the random variables that the decisionmaker chooses between with Pandora's Boxes~\citep{kleinberg2016descending}.
We show in Section \ref{sec:inefficient} that these techniques extend even when we replace these alternatives with MSPs, which are much more complex objects.
We do this via a chain of reductions from CMS to two other novel settings, which we call \emph{choice-over-bandits} and \emph{choice-over-rewards}, and finally to the original prophet inequality setting.

In Section \ref{sec:efficient} we show how to overcome inefficiencies in these reductions, and arrive at an efficient constant-factor approximation (Theorem \ref{thm:main-2}).
\elias{Originally described prophet inequality as single selection and then mentioned work extending to matroid constraints, removed per Robin's suggestion}

\paragraph{Choice-Over-Bandits.}
We reduce CMS to the \emph{choice-over-bandits} (COB) setting in Section \ref{subsec:msp-to-cob}.
A \emph{bandit process} is a MSP in which there is only one action available at each step.
If we commit to following a particular policy when interacting with a given MSP, we are constrained to a single action in each state (or a fixed distribution over actions, if the policy is randomized).
We can therefore view a MSP with a fixed policy as a bandit process.
If we are presented with a choice over many policies for a given MSP, we may view the MSP as a ``choice-over-bandits,'' defined formally in Definition \ref{def:cor}.
Bandit processes are well-studied, and an efficient optimal algorithm for interacting with and selecting a matroid subset of bandits is known~\citep{kleinberg2016descending,singla2018price,bowers2024matching}, 
Unfortunately, the reduction from a MSP to a COB is very inefficient, as there are exponentially many possible policies for any MSP, and therefore exponentially many bandits to choose from.
However, we show in Section \ref{subsec:threshold} how to efficiently choose the ``best'' policy (with respect to some threshold) for any MSP, and accordingly choose the ``best'' bandit from the corresponding COB, without computing the entire reduction.
We call this problem of optimally exploring a MSP subject to a given threshold the \emph{single agent utility problem} or ``SAUP'' for short, defined formally in Definition \ref{def:saup}.

\paragraph{Choice-Over-Rewards.}
We further reduce choice-over-bandits to \emph{choice-over-rewards} (COR) in Section \ref{subsec:cob-to-cor}.
A \emph{choice-over-rewards} process (Definition \ref{def:cor}) is a choice-over-bandits in which, after the initial choice, we immediately observe a reward.
We reduce COB to COR using well-known machinery from the Pandora's Box literature~\citep{weitzman1979optimal,kleinberg2016descending,singla2018price,bowers2024matching} to amortize the inspection costs of each bandit into a random variable.
This reduction is the most straightforward of the three and the only one that is immediately efficient.

In Section \ref{subsec:prophets}, we finally reduce the problem of choice-over-rewards selection to the classic prophet inequality problem, concluding our chain of reductions.
Using the structure of the ex-ante optimal solution to the COR problem, we collapse all choices into a single random variable reward, drawn from a distribution over the values of each individual possible reward according to the (randomized) choice of reward made by the ex-ante optimal algorithm.

We provide a direct prophet inequality-style approximation algorithm for the COR model in Section \ref{subsec:matroid-prophet}, and combine this alternate prophet algorithm with our SAUP subroutine to produce an efficient approximation algorithm for CMS with matroid feasibility structures in Section \ref{sec:efficient}.
\elias{copied Robin's edits in verbatim}

\elias{Mention mechanism design perspective? Conceptually important, technically trivial}

\section{Notation And Preliminaries} \label{sec:prelims}
\subsection{The Combinatorial Markov Search (\cms{}) Problem}

Our main problem involves interacting with and selecting among $n$ Markov Search Processes (MSPs), each defined as follows.

\begin{definition}[Markov Search Process] \label{def:msp}
A \emph{Markov Search Process} is a tuple $\M=(S,A,P,C,V)$.
$S$ is a finite set of states and $A$ a finite set of actions.
$P:S \times A \times S \to [0,1]$ is the \emph{transition function}, where $P(s';a,s)$ represents the probability of transitioning to state $s'$ given that the process is currently in state $s$ and action $a$ is taken.
There is a set of \emph{legal actions} $A^s \subseteq A$ for each state $s$.
For all $a \in A^s$, $\sum_{s'\in S} P(s';a,s)=1$.
$C:A\times S \to \reals_{\geq 0}$ is the \emph{cost function}.
$C(a,s)$ is the cost incurred by taking action $a \in A^s$ when the current state is $s$.
Finally, $V:S\to\reals_{\geq 0}$ is the \emph{reward function}, representing an available reward at state $s$.

Given a MSP $\M$, the \emph{state graph} of $\M$ contains a directed edge $(s,s')$ if $P(s';a,s) > 0$ for some $a$.
We assume the state graph is a directed acyclic graph (DAG).
The \emph{start state} of $\M$ is the unique state $s^*$ with no incoming edges.
A \emph{terminal state} of $\M$ is a state $s$ with no outgoing edges, i.e. $A^s = \emptyset$.
We assume that rewards are only positive on terminal states, i.e., $V(s)=0$ unless $s$ is a terminal state.
\end{definition}

\begin{figure}
	\begin{center}
		\begin{tikzpicture}[
			wide/.style={line width=4pt}, every node/.style={circle,draw,minimum size=12}, scale=.2]

			\node (root) at (-8,0) {};

			\coordinate (branch1) at (-4, 3) {};
			\coordinate (branch2) at (-4, -3) {};

			\node  (s1) at (0, 4.5) {};
			\node  (s2) at (0, 1.5) {};
			\node  (s3) at (0, -1.5) {};
			\node  (s4) at (0, -4.5) {};

			\coordinate (branch3) at (4, 6) {};
			\coordinate (branch4) at (4, 3) {};
			\coordinate (branch5) at (4, 0) {};
			\coordinate (branch6) at (4, -3) {};
			\coordinate (branch7) at (4,-6) {};

			\node  (s5) at (8, 6) {};
			\node  (s6) at (8, 3) {};
			\node  (s7) at (8, 0) {};
			\node  (s8) at (8, -3) {};
			\node  (s9) at (8,-6) {};

			\draw (root) -- (branch1);
			\draw (root) -- (branch2);

			\draw[dashed] (branch1) -- (s1);
			\draw[dashed]  (branch1) -- (s2);
			\draw[dashed]  (branch1) -- (s3);

			\draw[dashed] (branch2) -- (s2);
			\draw[dashed]  (branch2) -- (s3);
			\draw[dashed]  (branch2) -- (s4);

			\draw (s1) -- (branch3);
			\draw (s1) -- (branch4);

			\draw (s3) -- (branch5);
			\draw (s3) -- (branch6);
			\draw (s3) -- (branch7);

			\draw[dashed] (branch3) -- (s5);

			\draw[dashed] (branch4) -- (s6);
			\draw[dashed]  (branch4) -- (s7);

			\draw[dashed] (branch5) -- (s5);

			\draw[dashed] (branch6) -- (s7);
			\draw[dashed] (branch6) -- (s8);

			\draw[dashed] (branch7) -- (s9);
			\draw[dashed]  (branch7) -- (s6);
			\draw[dashed]  (branch7) -- (s8);

		\end{tikzpicture}

		\caption{
			An example of a general Markov Search Process (MSP).
			States are given by nodes in the graph.
			Solid lines represent available actions, while dotted lines represent possible randomized state transitions given the chosen action.
		}\label{fig:msp-random}
	\end{center}
\end{figure}

\begin{definition}[Combinatorial Markov Search$(\F)$] \label{def:cms}
Let $\F \subseteq 2^{[n]}$ be downward-closed.\footnote{We write $[n]$ for $\{1,\dots,n\}$. $\F$ is \emph{downward-closed} if $F \in \F, F' \subseteq F \implies F' \in \F$.}
An instance $\I$ of the \emph{Combinatorial Markov Search$(\F)$} or \emph{\cms$(\F)$} problem consists of a set of $n$ independent MSPs $\M_1,\ldots,\M_n$.
Each $\M_i$ begins in its start $s^*_i$.
An algorithm begins with an empty solution set $F = \emptyset$.
At any time, the algorithm may either:
\begin{enumerate}
\item Choose any MSP $\M_i$ in a non-terminal state $s_i$ and take some action $a_i\in A_i^{s_i}$.
      The algorithm incurs cost $C_i(a_i,s_i)$ and $\M_i$ transitions to a new state $s'_i$ drawn independently according to $P_i(~ \cdot ~ ;a_i,s_i)$.
\item \emph{Claim} a MSP $\M_i$ in a terminal state $s_i$, meaning to add $i$ to $F$.
      The algorithm receives reward $V_i(s_i)$.
\item Halt.
\end{enumerate}
We refer to either (1) or (2) as \emph{advancing} $\M_i$.
The algorithm is \emph{ex-post feasible} if $F\in\F$ with probability $1$.
Unless otherwise stated, all algorithms are assumed to be ex-post feasible.
\end{definition}

\paragraph{Welfare and performance.}
Given an algorithm and an instance $\I = (\M_1,\dots,\M_n)$, with $\M_i = (S_i,A_i,P_i,C_i,V_i)$, we write $\A_{i,s}$ as the indicator variable for claiming $\M_i$ in state $s$ and $\bI_{i,a,s}$ as the indicator for advancing $\M_i$ from $s$ by taking action $a$.
The \emph{performance} of an algorithm on a particular MSP $\M_i$ is the net value, i.e. reward claimed (if any) minus the sum of all costs incurred.
\[\Perf(\M_i)=\sum_{s\in S_i} \A_{i,s} V_i(s) - \sum_{s\in S_i, a\in A_i} \bI_{i,a,s}C_i(a,s) . \]
The \emph{welfare} of the algorithm on the instance is
\[\Welf(\I) = \sum_{i=1}^n \Perf(\M_i) . \]
We sometimes write $\Perf^{\ALG}$ and $\Welf^{\ALG}$ to clarify reference to a particular algorithm $\ALG$.
We may similarly write $\A_{i,s}^{\ALG}$ and $\bI_{i,a,s}^{\ALG}$. 
We write $\A_i := \sum_{s \in S_i} \A_{i,s}$.

The objective in CMS is to maximize expected welfare.
An algorithm $\ALG$ \emph{is an $\alpha$-approximation} if, for all algorithms $\OPT$ and all instances $\I$, $\E[ \Welf^{\ALG}(\I) ] \geq \alpha \E[ \Welf^{\OPT}(\I) ]$, with expectation taken over randomness of the instance's transitions and the algorithms themselves.

Our algorithms, though given the entire problem description in advance, will interact \emph{online}.
\begin{definition}[Online, Fully-Adaptive]\label{def:online}
  An algorithm is \emph{online} if, for each $i=1,\dots,n$, the algorithm must complete all interactions with $\M_i$ and decide irrevocably whether to claim $\M{}_i$ or not before it can take any action in $\M_{i+1}$.
  An algorithm that is not required to be online is \emph{fully-adaptive}.
  In either case, the full description of each $\M_i$ is given. 
\end{definition}

\subsection{Ex-ante \cms{} and Policies}
We will compare our algorithms, which are always ex-post feasible, to a stronger benchmark: the optimal algorithm for \cms{} that is only required to produce a feasible set ``in expectation.''
\begin{definition}[$\PF$, ex-ante feasible] \label{def:ex-ante}
	Given $F \subseteq [n]$, let $1_F \in [0,1]^n$ denote the indicator vector for $F$, i.e. $1_F(i) = \Indic{i \in F}$.
	Given $\F \subseteq 2^{[n]}$, define $\PF$ to be the convex hull of $\{1_F : F \in \F\}$.
	An algorithm $\OPT$ is termed \emph{ex-ante feasible} if the vector $Q$ defined by $Q_i = \Pr[i \in \OPT]$ satisfies $Q \in \PF$.
    The probability is taken over both the algorithm's coins and the random transitions.
\end{definition}
We also refer to $\PF$ as the \emph{feasible polytope}. 
For example, in single selection, $\PF$ consists of nonnegative vectors whose entries sum to at most one.
In that case, an example of an ex-ante feasible algorithm that is not ex-post feasible is one that claims each $\M_i$ with probability $\tfrac{1}{n}$ independently.

We say an algorithm $\ALG$ \emph{is an ex-ante $\alpha$-approximation} if, for all ex-ante feasible algorithms $\OPT$ and all instances $\I$, $\E[ \Welf^{\ALG}(\I) ] \geq \alpha \E[ \Welf^{\OPT}(\I) ]$.
Because every ex-post feasible $\OPT$ is also ex-ante feasible, this implies that $\ALG$ simply guarantees an $\alpha$-approximation as well.

\paragraph{Policies and independence.}
We use the term \emph{policy} for a ``local algorithm'' that is defined on a single MSP $\M_i = (S_i,A_i,P_i,C_i,V_i)$.
Formally, a policy $\rho$ is a possibly-randomized function from a \emph{history} $h=(s^*,a_1,s_1,\ldots,a_t,s_t)$ of states and actions in $\M_i$ to either ``halt'', ``claim'', or a legal next action $\rho(h) = a \in A^{s_t}$.
When executing a policy $\rho$ on an instance $\M_i$, we define the indicators $\A_{i,s}^{\rho}$ and $\bI_{i,a,s}^{\rho}$ and the performance $\Perf^{\rho}(\M_i)$ exactly as for algorithms above.
We typically use $\pi$ to denote a deterministic policy and $\rho$ for a possibly-randomized one.
When a policy depends only on the current state $s$, we write $\rho(s)$ rather than $\rho(h)$.

The following observation is crucial to our techniques throughout: the ex-ante benchmark is in a sense ``non-adaptive'', as our online algorithms will be.
\begin{observation} \label{obs:ex-ante-indep}
  Any ex-ante feasible algorithm for \cms{}, without loss of generality, consists of $n$ policies $\rho_1,\dots,\rho_n$ that are executed independently on $\M_1,\dots,\M_n$, respectively.
\end{observation}
\begin{proof}
  Let $\ALG$ be any ex-ante feasible algorithm.
  For each $i$, define $\rho_i$ to be the following policy on $\M_i$: run $\ALG$ on $\M_i$ together with simulated independent copies of $(\M_{i'} : i' \neq i)$.
  The algorithm that executes each $\rho_i$ independently on $\M_i$ has the same probability of claiming each $\M_i$ as $\ALG$ does, so it is ex-ante feasible, and it also has the same expected welfare as $\ALG$.
\end{proof}
We will use the above observation to analyze, in particular, the behavior of any ex-ante optimal algorithm for \cms{}.

\subsection{Bandits and Weitzman Indices} \label{subsec:bandits}

It will be useful throughout to define a special case of a Markov Search Process that we call a bandit process.
As in the Bayesian Multi-Armed-Bandit setting, a bandit process has only one available action at every time.
The action can be interpreted as paying a cost to ``pull the arm'', observing a state transition, possibly repeating until an available reward is revealed.
\begin{definition}[Bandit Process] \label{def:bandit}
A MSP $\M = (S,A,P,C,V)$ is a \emph{bandit process} if $|A^s| \leq 1$ for all $s\in S$.
\end{definition}

Bandits admit the following analysis tools that make them highly compatible with greedy algorithms~\citep{kleinberg2016descending, singla2018price, lee2018optimal, gupta2019markovian}, a fact that will be crucial to our approach.
For instance, the most general case of \cms{} that is known to have an efficient optimal algorithm is the case where every $\M_i$ is a bandit and $\F$ is a matroid \citep{gupta2019markovian}.
The optimal algorithm in this case defines the following ``Weitzman index'' for each bandit and always advances the bandit whose Weitzman index is currently largest.\footnote{This model closely resembles the Bayesian Multi-Armed-Bandits problem, a discounted infinite-horizon version of our problem where there is no selection. In this problem, when each alternative is a bandit, the Gittins Index Theorem gives an optimal algorithm. However, if the alternatives are generalized from bandits to general Markov Decision Processes, no approximation algorithm is known \citep{nash1980generalized}.}
\begin{definition}[Weitzman Index, Capped Value (following \citet{bowers2024matching})] \label{def:weitzman}
Given a bandit process $B=(S,A,P,C,V)$, the \textit{Weitzman index} $\sigma_s$ and the \textit{capped value} $\kappa_s$ of each state $s$ are defined by backwards induction as follows.
For any terminal state $s$, $\sigma_s = \kappa_s = V(s)$.
For any non-terminal state $s$, let the random variable $n(s)$ be the next state reached when advancing from $s$. Define $\sigma_s$ to be the unique value satisfying
\[\E_{n(s)}[(\kappa_{n(s)}-\sigma_s)^+]=C(s)\]
where $(\: \cdot \:)^+ = \max\{\cdot,0\}$ and $C(s)$ is the cost of taking the unique ``advance'' action from $s$.
We define $\kappa_s := \min\{\sigma_s,\kappa_{n(s)}\}.$
We also write $\kappa = \kappa_{s^*}$ where $s^*$ is the initial state of the process.
\end{definition}
We emphasize that $\kappa_s$, $n(s)$ are random variables, while $\sigma_s$ is a constant which can be computed prior to the realization of any randomness. 

The Weitzman index and capped value are useful because of the following definition and lemma.
\begin{definition}[Exposure] \label{def:exposed}
A policy $\rho$ is \textit{exposed} on a bandit $B$ if, with nonzero probability, $\rho$ advances from some state $s$, eventually visits some state $s'$ with $\sigma_{s'} > \sigma_{s}$, and does not advance from $s'$.
Otherwise, $\rho$ is \textit{non-exposed}.
\end{definition}
\begin{lemma}[\citet{kleinberg2016descending,bowers2024matching}] \label{lem:magic}
For any bandit process $B$ and policy $\rho$, $\E[\Perf^{\rho}(B)] \leq \E[\A^{\rho}\kappa]$, with equality if and only if $\rho$ is non-exposed on $B$.
\end{lemma}
The idea is to compare performance of $\rho$ to an ``amortized world'' where policies never pay any costs to advance, but only receive the reward $\kappa$ instead of $V(s)$ when claiming.
Lemma \ref{lem:magic} states that $\rho$'s performance is always upper-bounded by the performance of the same policy in the amortized world.
Furthermore, this upper bound is an equality if $\rho$ satisfies the non-exposure condition, which ensures that transition costs are always amortized into rewards.

\robin{begin edits}

While the primary goal of this work is developing an understanding of general $\msp$s, we note that exposure is defined only in the restricted setting of bandits (Definition \ref{def:bandit}), and with regard to a single bandit rather than an entire $\cms{}$ instance. 

In general, optimal algorithms with non-exposed policies exist in the setting where each \msp{} is a bandit. 
Versions of this result are related to the frugal algorithms of \citet{singla2018price} and the descending procedures of \citet{bowers2024matching}, which largely consist of advancing the bandit currently in a state with the highest $\sigma$, and a more general result is given in \citet{gupta2019markovian}.  

Beyond bandit settings, the concept of non-exposure deteriorates as the expectation of future values becomes dependent on future actions. 
Appendix \ref{sec:weitzman-interpretation} considers an exposure and Weitzman index-based interpretation of the algorithms developed in Section \ref{sec:inefficient}.

\robin{end edits}

\section{Reducing Combinatorial Markov Search to Prophet Inequalities} \label{sec:inefficient}

In this section, we prove Theorem \ref{thm:main-1}: whenever the constraint system $\F$ admits an ex-ante prophet inequality, the \cms$(\F)$ problem admits an online algorithm with the same approximation factor, albeit inefficiently. 
A prophet inequality bounds the advantage of being able to select a subset of random rewards ``offline'' versus needing to select them online.
Theorem \ref{thm:main-1} states that even when rewards are replaced by arbitrary Markov Search Processes, and the offline algorithm may explore the MSPs concurrently, its advantage is no greater.

Our approach consists of a chain of three reductions to simpler versions of the \cms{} problem where the Markov Search Processes $\M_1,\dots,\M_n$ are replaced with progressively simpler structures. 
Our first reduction replaces MSPs with \emph{Choice-Over-Bandits} processes, in which each alternative $i$ consists of picking one bandit process, then sticking with that choice.
We then reduce Choice-Over-Bandits processes to \emph{Choice-Over-Rewards} in which each alternative $i$ consists of picking one random reward to reveal out of several hidden options.
Finally, we reduce Choice-Over-Rewards to random rewards, the setting of prophet inequalities.

\begin{figure}[h]
  \newcommand{\impliesthm}[1]{\parbox[c]{12ex}{\vskip2em \centering $\longrightarrow$ \\ {\centering \scriptsize Prop. \ref{#1}}}}
  \newcommand{\problemeff}[3]{\parbox[t]{#1}{\centering #2 \\ {\scriptsize (#3)}}}
  \newcommand{\problemeffone}[2]{\parbox[t]{#1}{\centering #2 \\ }}

  {\small \hfill
  \addtolength{\tabcolsep}{-1.5ex}
  \begin{tabular}{ccccccc}
    &\problemeff{22ex}{Combinatorial Markov Search}{$\cms$}
    &\impliesthm{prop:cms-to-cobs}
    &\problemeff{14ex}{Choice-Over-Bandits Selection}{$\cobs$}
    &\impliesthm{prop:cobs-to-cors}    
    &\problemeff{14ex}{Choice-Over-Rewards Selection}{$\cors$}
    &\impliesthm{prop:cors-to-prophets}
    \problemeffone{14ex}{Classic Prophets}
  \end{tabular}
  \hfill} 
  \caption{\textbf{Reductions.} Given any downward-closed feasibility constraint and ex-ante $\alpha$ prophet inequality, we obtain an online $\alpha$-approximation algorithm for $\cms$ via a series of reductions.
  } 
  \label{fig:results-overview}
\end{figure}

\subsection{Reduction 1: Markov Search Process to Choice-Over-Bandits}\label{subsec:msp-to-cob}
We begin with the key and most complicated reduction, replacing MSPs with Choice-Over-Bandits processes.
First, we define a \emph{Choice-Over-Bandits} process and the \emph{Choice-Over-Bandits Selection} problem.
\begin{definition}[Choice-Over-Bandits] \label{def:cob}
  A \emph{Choice-Over-Bandits} (\cob{}) process $\C$ consists of $m$ bandit processes $B^1,\dots,B^m$ for some $m \geq 1$.
  An algorithm initially takes a costless action corresponding to some $j \in [m]$.
  From there, the algorithm interacts with the bandit process $B^j$, incurring all transition costs, eventually halting, and possibly claiming a reward.
  We allow the random transitions of $B^1,\dots,B^m$ to be arbitrarily correlated, i.e. to access shared randomness.\footnote{
    Allowing shared randomness will simplify our reduction later.
    We note that a $\cob$ process is not technically a special case of a $\msp$ which, for simplicity, was defined with independent randomness on all transitions.
    However, the definition of $\msp$ could be broadened to allow shared randomness without affecting any of our results, at the cost of a more complex definition.
}  

  The \emph{\cobs{}$(\F)$} problem is defined as the Combinatorial Markov Search$(\F)$ problem where each \msp{} $\M_i$ is replaced by a \cob{} $\C_i$.
  All other definitions, such as welfare and performance, are unchanged.
\end{definition}

The reduction from \cms{} to \cob{} comprises the core of the proof of Theorem \ref{thm:main-1}, and we elaborate here on the \cobs{} problem individually.

The \cobs{} problem restricts the number of choices available in any \msp{} to only one initial choice: which bandit to follow?
This problem alone is an interesting generalization of of the Pandora's Box model, capturing nonobligatory inspection settings (e.g. \citet{doval2018whether}) and matching settings, with appropriate selection of feasibility structures \citep{bowers2024matching}. 
The \cobs{} setting introduces some level of choice over the evolution of each process for the decisionmaker, without allowing arbitrary sets of choices over each \msp. 
This change is sufficient to immediately moves simple single-selection feasibility structures from computationally feasible to NP-hard \citep{fu2023pandora}.

Intuitively, our reduction from a \msp{} to a \cob{} ``detangles" the paths through the \msp, converting any sequence of choices over actions to a single choice of which bandit to pursue (see Figure \ref{fig:alternative-structures}). 
The key to this detangling process is the idea of \emph{induced bandits} (Definition \ref{def:induced-bandit}), which allows us to produce a separate bandit for every deterministic policy on a give \msp, and then choose over those bandits to produce a \cob{} instance. 

\begin{figure}
	\centering
	\begin{subfigure}[b]{0.2\textwidth}
		\centering
		\begin{tikzpicture}[
			wide/.style={line width=4pt}, every node/.style={circle,draw,minimum size=10}, scale=.25]

			\node (root) at (-4,0) {};

			\node (s1) at (-2, 2) {};
			\node (s2) at (-2, -2) {};

			\node  (s3) at (0, 3) {};
			\node  (s4) at (0, 0) {};
			\node  (s5) at (0, -3) {};

			\coordinate  (s6) at (2, 4) {};
			\coordinate  (s7) at (2, 2) {};
			\coordinate  (s8) at (2, 0) {};
			\coordinate  (s9) at (2,-4) {};

			\node (b1) at (6, 4) {};
			\node (b2) at (6, 2) {};
			\node (b3) at (6, 0) {};
			\node (b4) at (6, -2) {};
			\node (b5) at (6, -4) {};

			\draw (root) -- (s1);
			\draw (root) -- (s2);

			\draw (s1) -- (s3);
			\draw  (s1) -- (s4);

			\draw  (s2) -- (s4);
			\draw  (s2) -- (s5);

			\draw (s3) -- (s6);
			\draw (s3) -- (s7);

			\draw (s5) -- (s8);
			\draw (s5) -- (s9);

			\path (s6) -- (b1) node[midway, draw=none,fill=none] {\small $\dots$};
			\path (s8) -- (b3) node[midway, draw=none,fill=none] {\small $\dots$};
			\path (s9) -- (b5) node[midway, draw=none,fill=none] {\small $\dots$};
		\end{tikzpicture}
		\subcaption{MSP}\label{subfig:msp}
	\end{subfigure}
	\hfil
	\begin{subfigure}[b]{0.2\textwidth}
		\centering
		\begin{tikzpicture}[
			wide/.style={line width=4pt}, every node/.style={circle,draw,minimum size=10}, scale=.25]

			\node (root) at (-7, 0) {};

			\node (s1) at (-3.5,4) {};
			\node (s2) at (0,4) {};
			\node (s3) at (4.5,4) {};

			\draw (s1) -- (s2);
			\path (s2) -- (s3) node[midway, draw=none,fill=none] {\small $\dots$};

			\node (s4) at (-3.5,1.3) {};
			\node (s5) at (0,1.3) {};
			\node (s6) at (4.5,1.3) {};

			\draw (s4) -- (s5);
			\path (s5) -- (s6) node[midway, draw=none,fill=none] {\small $\dots$};

			\node (s7) at (-3.5,-1.3) {};
			\node (s8) at (0,-1.3) {};

			\draw (s7) -- (s8);

			\node (s9) at (-3.5,-4) {};
			\node (s10) at (0,-4) {};
			\node (s11) at (4.5,-4) {};

			\draw (s9) -- (s10);
			\path (s10) -- (s11) node[midway, draw=none,fill=none] {\small $\dots$};

			\draw (root) -- (s1);
			\draw (root) -- (s4);
			\draw (root) -- (s7);
			\draw (root) -- (s9);
		\end{tikzpicture}
		\subcaption{\cob{}}\label{subfig:cob}
	\end{subfigure}
	\hfil
	\begin{subfigure}[b]{0.2\textwidth}
		\centering
		\begin{tikzpicture}[
			wide/.style={line width=4pt}, every node/.style={circle,draw,minimum size=10}, scale=.25]

			\node (root) at (-3.5, 0) {};

			\node (s1) at (0,4) {};
			\node (s2) at (0,1.3) {};
			\node (s4) at (0,-1.3) {};
			\node (s7) at (0,-4) {};

			\draw (root) -- (s1);
			\draw (root) -- (s2);
			\draw (root) -- (s4);
			\draw (root) -- (s7);
		\end{tikzpicture}
		\subcaption{\cor{}}\label{subfig:cor}
	\end{subfigure}

	\caption{\textbf{\msp, \cob, and \cor.} A simplified representation of the difference between the action structures used in our reductions, ommitting the randomness of transitions (shown in Figure \ref{fig:msp-random} as dashed lines). 
	Actions advance left to right, with vertices representing states and edges representing actions.
	(\ref{subfig:msp}) A full $\msp$, where costly actions may form any DAG.
	(\ref{subfig:cob}) A Choice-Over-Bandits structure, where after an inital action with cost 0, each state has only one legal action. 
	(\ref{subfig:cor}) A Choice-Over-Rewards structure, where each initial costless action leads to a random reward. 
	}\label{fig:alternative-structures}
\end{figure}

\begin{definition}[Induced bandit] \label{def:induced-bandit}
  Given a \msp{} $\M_i$ and a deterministic policy $\pi$ on $\M_i$, define the \emph{induced bandit process} $B_i^{\pi}$ to be the bandit \msp{} that mimics following $\pi$ on $\M_i$.
  More formally, states in $B_i^{\pi}$ consist of histories $h$ in $\M_i$ reachable by $\pi$; the only legal action at each non-terminal $h$ is ``continue''; and continuing at a non-terminal $h$ has cost and transition probabilities equal to taking the action $\pi(h)$ in $\M_i$.
  If $\pi(h)$ halts or claims, then $h$ is a terminal state in $B$ with reward zero or $V(s)$ respectively, where $s$ is the final state in history $h$.
\end{definition}

\begin{reduction}[\MSPtoCOB{}] \label{reduction:msp-to-cob}
  Given an instance $\I = (\M_1,\dots,\M_n)$ of $\cms(\F)$, the \emph{\MSPtoCOB}$(\I)$ reduction constructs an instance $\I'$ of \cobs$(\F)$ as follows.
  For each MSP $\M_i$ in $\I$, create a Choice-Over-Bandits process $\C_i$ in $\I'$.
  Its choices consist of the bandit processes $B_i^{\pi}$ induced by every deterministic policy $\pi$ on $\M_i$ (of which there are finitely many).
\end{reduction}
We note that, for this reduction, it is necessary in the definition of \cob{} that the bandits may have correlated transitions.

\begin{restatable}{lemma}{cobstocms} \label{lemma:alg-cobs-cms}
  Let $\I$ be an instance of \cms{} and $\I' = $\MSPtoCOB$(\I)$.
  For any policy $\rho'$ on $\C_i$, there exists a policy $\rho$ on $\M_i$ such that $\Perf^{\rho}(\M_i) = \Perf^{\rho'}(\C_i')$ and the probabilities that $\rho$ and $\rho'$ claim are the same.
  Furthermore, given any policy $\rho$ on $\M_i$, there exists a policy $\rho'$ on $\C_i$ with the same guarantee.
  
\end{restatable}
\begin{proof}[Sketch (proof in Section \ref{app:inefficient})]
  $\rho'$ first selects one of the bandits $B_i^{\pi}$ in $\C_i$.
  At each subsequent step, $\rho'$ either advances $B_i^{\pi}$, claims $\C_i$, or halts. 
  We define $\rho$ to always make the same choice in $\M_i$, i.e. $\pi(h)$ at the current history $h$, claim, or halt.
  By construction, the choice is always legal in $\M_i$ at the current history $h$.
  The costs incurred and reward claimed, if any, are the same.

  To prove the converse, we first observe that a randomized policy $\rho$ for $\M_i$, WLOG, first draws a deterministic policy $\pi$ from some probability distribution and then follows it.
  Define $\rho'$ to draw $\pi$ from this distribution and choose the bandit $B_i^{\pi}$, always advancing unless at a terminal state with reward zero, in which case $\rho'$ halts.
  Then $\rho'$ simulates the behavior of $\rho$ on $\M_i$.
\end{proof}

\begin{restatable}[\cms{} to \cobs{}]{proposition}{propcmstocobs} \label{prop:cms-to-cobs}
Let $\F$ be downward-closed.
If there exists an online ex-ante $\alpha$-approximation algorithm $\ALG'$ for \cobs$(\F)$, then there exists an online ex-ante $\alpha$-approximation algorithm $\ALG$, not necessarily efficient, for $\cms(\F)$.  
\end{restatable}
\begin{proof}[Sketch (proof in Section \ref{app:inefficient})]
  Lemma \ref{lemma:alg-cobs-cms} implies the following claim: any algorithm for $\cobs$ that consists of executing a policy on $\C_1$, then a policy on $\C_2$, and so on, can be converted to an algorithm for $\cms$ with the same approximation factor. The reverse is also true.
  In particular, this claim applies to the online algorithm $\ALG'$.
  Also, by Observation \ref{obs:ex-ante-indep}, the optimal ex-ante feasible algorithms $\OPT'$ for $\cobs$ and $\OPT$ for $\cms$ consist WLOG of collections of independent policies for each alternative, so the above claim applies to $\OPT$ and $\OPT'$ as well.
\end{proof}

Note that the number of deterministic policies on a MSP $\M$ is at least exponential in the size of the state graph of $\M$, so the above reduction is not efficient.
In Section \ref{sec:efficient}, we will use the same reduction for analysis, but give an algorithm that efficiently selects a policy without constructing the reduction explicitly.

\subsection{Reduction 2: Choice-Over-Bandits to Choice-Over-Rewards}\label{subsec:cob-to-cor}

We next reduce to the simpler case where each alternative $i$ is a \emph{Choice-Over-Rewards}.

\robin{begin edits}
\begin{definition}[Choice-Over-Rewards]  \label{def:cor}
  A \emph{Choice-Over-Rewards} (\cor{}) process $\C_i$ consists of $m$ random variables $X_i^1,\dots,X_i^m$ for some $m \geq 1$.
  An algorithm initially chooses $j \in [m]$ and observes $X_i^j$.
  The algorithm then may decide either to discard $i$; or to claim $i$, obtaining reward $X_i^j$.
  We allow $X_i^1,\dots,X_i^m$ to be arbitrarily correlated with each other (but independent of any $X_{i'}^j$ for $i' \neq i$).

  The \emph{\cors{}$(\F)$} problem is defined as the Combinatorial Markov Search$(\F)$ problem where each \msp{} $\M_i$ is replaced by a \cor{} $\C_i$.
  The performance on $\C_i$ is simply the reward $X_i^j$ claimed at $i$ if any; and the welfare is the sum of claimed rewards.
\end{definition}

Here, we may picture each $i$ as a ``cabinet'' consisting of $m$ ``drawers'', inside each of which is a random prize.\footnote{Thanks to Bobby Kleinberg for giving this metaphor for the problem.}
An algorithm may only open one of the drawers on each cabinet.\footnote{Thus, the \cors{} problem is a type of stochastic probing problem. We have not found a reference to it specifically in the literature.}

Our next reduction is from Choice-Over-Bandits to Choice-Over-Rewards.
Our reduction amortizes the transition costs of the bandit processes using techniques from the Pandora's Box literature ~\citep{kleinberg2016descending,bowers2024matching}.
Essentially, we condense each bandit process into a representative index representing an expected overall reward for continuing to advance the bandit. 
This indexing allows us to make an initial decision over ``rewards" summarizing the expected benefit of each bandit using a Choice-Over-Rewards algorithm, and combining this with known bandit approaches we obtain an algorithm for Choice-Over-Bandits. 

\robin{end edits}

\begin{reduction}[\COBtoCOR{}] \label{reduction:cob-to-cor}
 Given a Choice-Over-Bandits $\C_i$ consisting of bandits $B_i^1,\dots,B_i^m$, we let the Choice-Over-Rewards instance $\C_i'$ consist of random rewards $\kappa_i^1,\dots,\kappa_i^m$ where $\kappa_i^j$ is the capped value of $B_i^j$ (Definition \ref{def:weitzman}).
  Given an instance $\I = (\C_1,\dots,\C_n)$ of \cobs$(\F)$, the \emph{\COBtoCOR}$(\I)$ reduction constructs an instance $\I'$ of \cors$(\F)$ consisting of $(\C_1',\dots,\C_n')$.
\end{reduction}

We use the fact that online algorithms for \cors{} are, WLOG, ``quantile-based''.
This is related to the the ``monotone'' property of prophet-inequality algorithms~\citet{kleinberg2012matroid}.

\begin{definition}[Quantile-based] \label{def:quantile}
  An algorithm for \cors{} is \emph{quantile-based} if, for each alternative $i$ and drawer $j$, there exists $q_i^j \in [0,1]$ such that, conditioned on the algorithm opening drawer $j$ on alternative $i$, the algorithm claims $i$ if and only if $X_i^j$ is in its top $q_i^j$ quantile of realizations.\footnote{For a discrete random variable, this may involve a randomized decision; for example, given a Bernoulli$(\tfrac{1}{2})$ reward and $q_i^j = 0.25$, the reward should be taken half of the time when its value is $1$ and never when its value is $0$.}
\end{definition}

\begin{observation} \label{obs:quantile-based}
  For any online algorithm for \cors{}, there exists a quantile-based online algorithm with at least as large expected performance and the same feasibility guarantee.
\end{observation}
\begin{proof}
For any online algorithm which claims $X_i^j$ with probability $q_i^j$, we can maintain feasibility while moving all probability of claiming to the top $q_i^j$ quantile of reward realizations.
Because the algorithm is online, the criteria for selection does not affect our selection of any preceding alternatives.
And as the probability of selecting $i$ remains the same, following alternatives can be claimed with the same probability. 
\end{proof}

We now complete this portion of the reduction exactly as the previous portion: showing that policies for $\cor$ can be converted to policies for $\cob$, then ``lifting'' this guarantee to algorithms that consist of running policies on the alternatives one at a time.

\begin{lemma} \label{lemma:alg-cors-cobs}
  Let $\I$ be an instance of \cobs{} and $\I' = $\COBtoCOR$(\I)$.
  For any policy $\rho'$ on $\C_i'$, there exists a policy $\rho$ on $\C_i$ such that $\Perf^{\rho}(\C_i) \geq \Perf^{\rho'}(\C_i')$ and the probabilities that $\rho$ and $\rho'$ claim are the same.
  Given any policy $\rho$ on $\C_i$, the reverse is also true.
\end{lemma}
\begin{proof}
  Given policy $\rho'$ on the Choice-Over-Rewards $\C_i'$, by Observation \ref{obs:quantile-based}, WLOG it is quantile-based with parameters $\{q_i^j\}$.
  Define $\rho$ on the Choice-Over-Bandits $\C_i$ to initially make the same choice $j$ as $\rho'$.
  Here, treat $j$ as a random variable, the choice.
  Let $\rho$ be the non-exposed policy on bandit $B_i^j$ that claims if $\kappa_i^j$ is in its top $q_i^j$ quantile of realizations; it is constructed in Lemma \ref{lemma:non-exposed-policy}.
  By non-exposure and Lemma \ref{lem:magic}, $\E[\Perf^{\rho}(\C_i)] = \E[\A_i^{\rho} \kappa_i^j] = \E[\A_i^{\rho'} X_i^j] = \E[\Perf^{\rho'}(C_i')]$.

  For the reverse direction, given policy $\rho$ on the \cob{} $\C_i$, let $q_i^j$ be the probability with which it claims conditioned on making choice $j$.
  Define $\rho'$ to make the same choice $j$ as $\rho$, then claim if $X_i^j$ is in its upper $q_i^j$ quantile.
  Because $X_i^j = \kappa_i^j$ in distribution and by definition of $q_i^j$, we have $\E[ \A_i^{\rho'} X_i^j] \geq \E[ \A_i^{\rho} \kappa_i^j]$.
  Then, using Lemma \ref{lem:magic}, $\E[\Perf^{\rho'}(\C_i')] = \E[\A_i^{\rho'} X_i^j] \geq \E[ \A_i^{\rho} \kappa_i^j] \geq \E[ \Perf^{\rho}(\C_i)]$.
\end{proof}

\begin{proposition}[\cobs{} to \cors{}] \label{prop:cobs-to-cors}
  Let $\F$ be downward-closed.
  If there exists an online, ex-ante $\alpha$-approximation for \cors$(\F)$, there exists an online, ex-ante $\alpha$-approximation for \cobs$(\F)$.
\end{proposition}
\begin{proof}
  Exactly analogous to Proposition \ref{prop:cms-to-cobs}, substituting Lemma \ref{lemma:alg-cors-cobs} for Lemma \ref{lemma:alg-cobs-cms}.
\end{proof}

\subsection{Reduction 3: Choice-Over-Rewards to Prophets} \label{subsec:prophets}
Finally, we reduce from \cors$(\F)$ to prophet inequalities, which address the online version of the following problem.

\begin{definition}[\vars] \label{def:selection}
An instance of the \vars$(\F)$ problem consists of a set $X = (X_1,\ldots,X_n)$ of independent random variables.
The algorithm's goal is to select the set $F \in \F$ maximizing $\sum_{i \in F} X_i$.
\end{definition}
An online ex-ante $\alpha$-approximation for \vars$(\F)$ is called an  \emph{ex-ante $\alpha$ prophet inequality}.

\robin{begin edits}
This result relies on observations of the structure of the optimal ex-ante algorithm for the \cors{} problem. 
We observe that the optimal ex-ante \cors{} algorithm induces a distribution over rewards chosen for each Choice-Over-Rewards structure, independently and uncorrelated to the realization of each reward. 
Furthermore, the ex-ante algorithm's probability of selecting each reward can be computed and combined with the realizations of each reward to simulate a Prophets instance with a single reward. 

\robin{end edits}
\begin{proposition}[\cor{} to \vars{}] \label{prop:cors-to-prophets}
Let $\F$ be a non-empty, downward-closed set system.
If there exists an ex-ante $\alpha$ prophet inequality for $\F$, there exists an online ex-ante $\alpha$-approximation for $\cors(\F)$.
\end{proposition}
\begin{proof}
The optimal ex-ante feasible algorithm $\OPT$ for \cors$(\F)$ induces some $Q \in\PF$ where $Q_i$ is the probability of claiming $\M_i$.
Given $Q_i$, WLOG by Observation \ref{obs:ex-ante-indep}, the selection decision on $\C_i$ is statistically independent of all actions on other alternatives $\C_{i'}$.
Therefore, also WLOG, $\OPT$ decides which drawer $j^*(i)$ to open on $\C_i$ independently according to some distribution $\lambda_i\in\Delta_{[m_i]}$.

Let $Y_i$ be a random variable distributed as $Y_i = X_i^{j^*(i)}$, or in other words, $Y_i = X_i^j$ with probability $\lambda_i(j)$.
The $Y_i$s are independent, and $\Welf^{\OPT}(X)=\Welf^{\OPT'}(Y)$, where $\OPT'$ is the optimal ex-ante feasible algorithm for \vars$(\F)$.

Given a prophet inequality, i.e. given an online ex-ante $\alpha$-approximation $\ALG'$ for \vars$(\F)$, and given $\Lambda=(\lambda_1,\ldots,\lambda_n)$, we construct an online algorithm $\ALG$ for \cors$(\F)$.
We instantiate $\ALG'$ on the distributions of $Y_1,\dots,Y_n$.
On arrival $i$, we open drawer $j$ with probability $\lambda_i(j)$.
We set $Y_i = X_i^j$ and claim alternative $i$ if and only if $\ALG'$ claims on $Y_i$.
Our algorithm $\ALG$ is ex-post feasible because $\ALG'$ is, and its performance on each arrival equals that of $\ALG'$.
So
\[
\E[\Welf^{\ALG}(X)] = \E[\Welf^{\ALG'}(Y)] \geq \alpha\cdot \E[\Welf^{\OPT'}(Y)] = \alpha\cdot \E[\Welf^{\OPT}(X)] .
\]
\end{proof}

\begin{proof}[Proof of Theorem \ref{thm:main-1}]
  Let $\F$ be downward-closed.
  By Proposition \ref{prop:cors-to-prophets}, if there is an ex-ante $\alpha$ prophet inequality for $\F$, then there is an online ex-ante $\alpha$-approximation algorithm for \cors$(\F)$.
  By Proposition \ref{prop:cobs-to-cors}, there is therefore one for \cobs$(\F)$, and by Proposition \ref{prop:cms-to-cobs}, therefore one for \cms$(\F)$.
\end{proof}
\citet{lee2018optimal} provide a $\frac{1}{2}$-approximation ex-ante prophet inequality for matroids: it follows that there exists a $\frac{1}{2}$ approximate online algorithm for \cms$(\F)$.
They additionally provide a $\left(1-\tfrac{1}{e}\right)$-approximation ex-ante prophet inequality for matroids when arrival order is stochastic, which translates to the same guarantee for \cms$(\F)$ as well.

\section{Efficient Approximation Algorithm}  \label{sec:efficient}

In this section, we prove Theorem \ref{thm:main-2}: existence of an \textbf{efficient} $(\tfrac{1}{2}-\epsilon)$-approximation algorithm for $\cms(\F)$, for every $\epsilon > 0$, whenever $\F$ forms a matroid.
To do so, we will define algorithms of a particular form.
They turn out to be \emph{incentive-compatible} in the mechanism design sense (see Subsection \ref{subsec:mech-design}), and incentive-compatibility will be crucial to achieving an efficient approximation algorithm.

\subsection{Threshold-Based Algorithms} \label{subsec:threshold}

Traditional prophet inequality algorithms often set a fixed threshold for each arriving random reward, accepting the reward if and only if it exceeds the threshold.
In mechanism-design settings, we can view the threshold as a posted price.
The arrival is an agent with a random value for ``winning'' in the auction, i.e., being included in the final solution $F \subseteq [n]$.
The agent chooses to ``buy'', i.e. be included in $F$, if and only if her value exceeds the posted price.

Our reductions in Section \ref{sec:inefficient} converts such prophet inequalities into algorithms that are not incentive-compatible\footnote{The reduction in Section \ref{sec:inefficient} takes a threshold-based prophet inequality, computes the probability $q_i$ with which it claims an arrival $i$, then selects a policy $\pi$ for $\M_i$ randomly from a certain distribution chosen to maximize ``utility'' subject to claiming with probability $q_i$.} nor computationally efficient.
We will address both problems by defining \emph{threshold-based} algorithms in the Combinatorial Markov Search setting.
An alternative $\M_i$ is viewed as an agent who interacts with a Markov Search Process to discover their value for ``winning'' in the auction.
Given a threshold $\tau$, the agent optimally explores $\M_i$ and decides whether or not to ``buy'' the revealed reward for price $\tau$.
This exploration may involve abandoning the search at any point if the expected utility for continuing becomes negative given that the final reward will cost $\tau$ to claim.
We emphasize that, for now, the agent is merely a thought experiment used to define our algorithm.

\begin{definition}[SAUP, threshold-based]  \label{def:saup}
  Given a single Markov Search Process $\M$ and a threshold $\tau$, the \emph{Single-Agent Utility Problem (SAUP)} is to compute a policy $\pi$ on a given MSP $\M$ that maximizes $\E[\Perf^{\pi}(\M)-\A^{\pi}_{}\tau]$.
  An algorithm for Combinatorial Markov Search is \emph{threshold-based} if, for each arrival $\M_i$, it computes a threshold $\tau_i$ based on the problem instance and on previous decisions, then solves SAUP$(\M_i,\tau_i)$.
\end{definition}
For example, suppose $\M_i$ consists of a Pandora's box, i.e. there is a single action with cost $c_i$ that leads to a stochastic reward.
In this case, SAUP$(\M_i,\tau_i)$ is the following policy: if $\tau_i$ is greater than the Weitzman index $\sigma_i$ of the box, then discard the box immediately without paying to open it; otherwise, open the box, discard the reward if it is smaller than $\tau_i$, and otherwise claim the reward.
We now show that SAUP can be computed efficiently for a generic Markov Search Process.
A similar result was derived independently by \citet{chawla2025combinatorial}.

\begin{proposition} \label{prop:saup}
  There is a polynomial-time algorithm for SAUP$(\M,\tau)$.
\end{proposition}
The algorithm computes a policy $\pi^*$ by backward induction as follows.
\begin{itemize}
  \item At a terminal state $s$: $\pi^*$ claims if $V(s) \geq \tau$, else halts.
        Let $\Val^{\pi^*,s} := (V(s) - \tau)^+$.
  \item At a non-terminal state $s$: Let $p_{a,s} := P(~\cdot~;a,s)$ be the probability distribution over the next state when taking action $a$.
        Define $v^{\pi^*}(a;s) := \E_{s' \sim p_{a,s}} [ \Val^{\pi^*,s'} ] - C(a,s)$.
        Let $a^* = \arg\max_a v^{\pi^*}(a;s)$ and set $\pi^*(s) = a^*$ unless $v^{\pi^*}(a^*;s) \leq 0$, in which case $\pi^*(s) = $ halt.
        Let $\Val^{\pi^*,s} = 0$ if $\pi^*(s) = $ halt, else $\Val^{\pi^*,s} = v^{\pi^*}(a^*,s)$.
\end{itemize}
Intuitively, at state $s$, the algorithm inductively calculates a policy for each possible successor of $s$, as well as its expected reward,
and selects the action $a^*$ with the highest expected reward, then takes action $a^*$ or halts if its expected reward is nonpositive. 

\begin{proof}[Proof of Proposition \ref{prop:saup}]
  For any deterministic policy $\pi$, consider running $\pi$ on $\M$ starting from state $s$.
  Let $\A^{\pi,s}$ be the indicator that $\pi$ claims and let $\Perf^{\pi,s}$ be the performance.
  The objective for $\SAUP(\M,\tau)$ is $\E[ \Perf^{\pi,s} - \A^{\pi,s}\tau ]$.
  Observe that for $\pi=\pi^*$, this is equal to $\Val^{\pi^*,s}$. \elias{say more?}
  We show $\Val^{\pi,s} \leq \Val^{\pi^*,s}$ for all $s$ by backward induction.
  At terminal states $s$, $\Val^{\pi,s} \leq (V(s) - \tau)^+ = \Val^{\pi^*,s}$.
  At non-terminal states $s$, if $\Val^{\pi,s} \leq 0$ then $\Val^{\pi,s} \leq \Val^{\pi^*,s}$ immediately.
  Otherwise:
  \begin{align*}
    \Val^{\pi,s}
    &=    \E_{s' \sim p_{\pi(s),s}} [ \Val^{\pi,s'} ] - C(\pi(s),s)  \\
    &\leq \E_{s' \sim p_{\pi(s),s}} [ \Val^{\pi^*,s'} ] - C(\pi(s),s)  & \text{inductive hypothesis}  \\
    &\leq \E_{s' \sim p_{\pi^*(s),s}} [ \Val^{\pi^*,s'} ] - C(\pi^*(s),s)  & \text{construction of $\pi^*$}  \\
    &= \Val^{\pi^*,s} .
  \end{align*}
  In particular, when $s$ is the start state, this proves that $\pi^*$ is optimal among deterministic policies.
  Any randomized algorithm can be written as a distribution over deterministic policies, so $\pi^*$ is optimal among all policies.

  For each state $s$ of $\M$ and for each action $a\in A^s$, the algorithm performs a polynomial-time computation to determine $v^{\pi^*}(a,s)$.
  Therefore, the algorithm runs in polynomial time in the size of the state graph of $\M$.
\end{proof}

\subsection{A Matroid ``Prophet'' Algorithm} \label{subsec:matroid-prophet}

We design an online threshold-based algorithm.
It has two components: computing the threshold $\tau_i$ for each arrival $\M_i$, and running SAUP$(\M_i,\tau_i)$ to interact with $\M_i$ and determine whether to claim it.
We have already given a polynomial-time algorithm for SAUP$(\M_i,\tau_i)$ in Proposition \ref{prop:saup}, so we now turn to computing the thresholds.

\paragraph{Computing thresholds and approximating ex-ante OPT.}
Matroid prophet algorithms such as in \citet{kleinberg2012matroid,lee2018optimal} require knowledge of the optimal fully-adaptive solution in order to compute thresholds.
While we do not know how to compute or even approximate OPT in our setting, we will approximate the ex-ante feasible OPT, which is an upper bound on the ex-post feasible OPT.

Approximation, however, presents an additional challenge because existing algorithms of \citet{kleinberg2012matroid,lee2018optimal} cannot obviously be adapted to work when OPT is only approximated.\footnote{The thresholds in these algorithms are based on differences in OPT across different inputs, so an approximation guarantee for OPT does not immediately translate to any approximation guarantee for those algorithms.}
Our algorithm addresses this in part by requiring only a single call to an approximation algorithm for the ex-ante feasible $\OPT$.
We then base all of the algorithm's thresholds on this solution.
More precisely, we require the approximation algorithm to output the following.

\begin{restatable}[$\epsilon$-approximate collection]{definition}{defepsapproxcoll} \label{def:eps-approx-coll}
  Given an input $\M_1,\dots,\M_n$ to the ex-ante $\cms(\F)$ problem, an \emph{$\epsilon$-approximate collection} is a pair of vectors $(Q,\hat{U})$ satisfying the following:
\begin{enumerate}
  \item $Q \in \PF$, the feasible polytope.
  \item there exists an ex-ante feasible algorithm $\OPTHAT$ that is a $1-\epsilon$ approximation and satisfies $Q_i \geq \Pr[i \in \OPTHAT]$ for all $i$.
  \item letting $U_i$ be the expected utility $\OPTHAT$ obtains on $\M_i$, $\hat{U}$ approximates $U$ from below, i.e. $\hat{U}_i \leq U_i$ for all $i$ and $\sum_i \hat{U}_i \geq (1-\epsilon)\sum_i U_i$.
\end{enumerate}
\end{restatable}
We can think of $(Q,\hat{U})$ as a witness for the existence of an approximately-optimal ex-ante feasible algorithm.
We allow $\hat{U}$ to be approximate because it may be computationally intractable to compute $U$ exactly.

\begin{restatable}{proposition}{propexanteapprox} \label{prop:ex-ante-approx}
  When $\F$ is a matroid, there is an algorithm for ex-ante $\cms(\F)$ that ouputs an $\epsilon$-approximate collection in time polynomial in the input size and $\frac{1}{\epsilon}$.

\end{restatable}
The full proof is contained in Subsection \ref{subsec:ex-ante-approx-pf}. The algorithm uses backward induction on each MSP to approximate, at each state, the concave nondecreasing function $f(q)$ mapping a probability $q$ to the maximum expected utility that can be generated, starting from that state, subject to claiming with probability at most $q$.
Convex programming is used at each step, with the accumulation in error carefully bounded.
We initially give an additive-multiplicative approximation guarantee, then round small enough values down to zero, obtaining the $\epsilon$-approximate collection.
Using the same technique, we are also able to give a Fully-Polynomial Time Approximation Scheme (FPTAS) for ex-ante $\cms(\F)$ whenever $\F$ is a matroid, presented in Proposition \ref{prop:fptas-itself}.

\paragraph{The matroid prophet algorithm.}
Our online algorithm is based on the matroid prophet approach of \citet{kleinberg2012matroid,lee2018optimal}, in which thresholds $\tau_i$ are based on the expected marginal contribution of arrival $\M_i$ to the optimal solution that is constrained to coincide with the algorithm's choices so far (i.e. on arrivals $1,\dots,i-1$).
The outline of the algorithm is as follows.
\begin{enumerate}
  \item Compute an $\epsilon$-approximate collection $(Q,\hat{U})$ to the ex-ante optimum.
  \item Decompose $Q$ into a probability distribution $D_Q$ over feasible subsets $F \in \F$.
        This turns out to be achievable in polynomial time because matroid polytopes have efficient separation oracles.
  \item On each arrival $\M_i$, use $D_G$ and $\hat{U}$ to compute a threshold $\tau_i$.
        The formula for $\tau_i$ represents an ``expected contribution'' of $\hat{U_i}$ to the subset of $\{1,\dots,i-1\}$ claimed by our algorithm so far, intersected with a random feasible subset drawn from $D_Q$.
        Run \SAUP$(\M_i,\tau_i)$ to interact with $\M_i$ and determine whether to claim it.
\end{enumerate}

See Algorithm \ref{alg:matroid-prophet} for full details.

\begin{restatable}{proposition}{propmatroidprophet} \label{prop:matroid-prophet}
  Let $\F$ be a matroid.
  Given $\M_1,\dots,\M_n$, let $(Q,\hat{U})$ be an $\epsilon$-approximate collection for any $\epsilon \geq 0$.
  There is a polynomial-time online algorithm $\ALG$ for $\cms(\F)$ that takes as input $(Q,\hat{U})$, interacts with each arriving $\M_i$ only through calls to \SAUP$(\M_i,\tau)$, and guarantees $\E[\ALG] \geq \tfrac{1}{2}\sum_{i=1}^n \hat{U}_i$.
\end{restatable}

The full proof is contained in Subsection \ref{subsec:matroid-prophet-pf}. We note that the algorithm takes any feasible $(Q,\hat{U})$ and provides a guarantee relative to $\hat{U}$.
The better the algorithm $\OPTHAT$ associated with $Q$, and the better $\hat{U}$ approximates the true performances $U$ of that algorithm, the better the approximation.
In the case $\epsilon = 0$, we obtain a $\tfrac{1}{2}$ approximation.

To prove Proposition \ref{prop:matroid-prophet}, we reduce to the Choice-Over-Reward Selection setting.
There, we consider a standard prophet inequality decomposition of performance into ``revenue'' (interpreted as payments of agents for ``winning'') and ``utility'' of the agents.
Although Choice-Over-Reward Selection appears more complex than traditional prophets settings, it remains true that, by solving \SAUP{}, the ``utility'' of our algorithm on a given arrival exceeds the ``utility'' of $\OPT$ against the same ``posted price'' threshold.
We can then continue with a somewhat standard analysis, although care is needed to adapt known techniques.
For example, the algorithm utilizes an expectation over ``Bernoullified'' rewards for efficiency reasons, but the proof needs to consider ``re-randomized'' rewards that are converted to Bernoullified rewards at just the right point in the analysis.

We now have all the components to prove our second main result, an efficient $(\tfrac{1}{2}-\epsilon)$-approximation for $\cms(\F)$ for any matroid $\F$.

\begin{proof}[Proof of Theorem \ref{thm:main-2}]
  Let $\OPT$ be the optimal ex-ante feasible solution, which upper-bounds the optimal ex-post feasible solution.
  Given $\epsilon > 0$ and $\M_1,\dots,\M_n$, using Proposition \ref{prop:ex-ante-approx}, we compute in time $\text{poly}(\M_1,\dots,\M_n,\tfrac{1}{\epsilon})$ an $\epsilon$-approximate collection $(Q,\hat{U})$.
  Then, we execute the online algorithm of Proposition \ref{prop:matroid-prophet} with $\M_1,\dots,\M_n$ and $(Q,\hat{U})$ as input.
  The online algorithm runs in polynomial time, making $n$ calls to our \SAUP{} algorithm of Proposition \ref{prop:saup}, each of which runs in polynomial time.

  The algorithm's performance satisfies
  \begin{align*}
    \E[\ALG] &\geq \frac{1}{2}\sum_{i=1}^n \hat{U}_i  & \text{Proposition \ref{prop:matroid-prophet}}  \\
             &\geq \frac{1-\epsilon}{2} \sum_{i=1}^n U_i   & \text{Definition \ref{def:eps-approx-coll}} \\
             &\geq \frac{(1-\epsilon)^2}{2}  \OPT  & \text{Definition \ref{def:eps-approx-coll}}  \\
             &\geq \left(\tfrac{1}{2} - \epsilon\right) \OPT .
  \end{align*}
\end{proof}

\subsection{Application to Mechanism Design} \label{subsec:mech-design}

Consider a setting with $n$ strategic agents and a resource to be allocated.
There is a constraint $\F \subseteq 2^{[n]}$ specifying which subsets $S$ of agents may be feasibly allocated the resource at the same time.
Each agent $i$ interacts with an independent, private Markov Search Process $\M_i$ representing investment, research, and development regarding how to utilize the resource.
The costs of the MSP are incurred by the agent, and the available reward of the MSP is the value generated for the agent if they are allocated, i.e. included in $S$.
An agent's utility is their reward if allocated (else zero), minus the sum of all costs incurred in their MSP, minus any payment the agent makes.
We call the version of the $\cms$ problem where agents control each MSP strategic the \cms{}-Auction problem. 

A \emph{mechanism} is a game played by the $n$ agents where an outcome consists of an allocation $S \in \F$ along with payments charged to each agent.
We assume the agents must reach a terminal state of the MSP and reveal a reward to be allocated.
A mechanism has a \emph{Price of Anarchy (PoA)} of $\alpha$ if, in every Nash equilibrium, the expected sum of agent utilities and the payments is at least $\alpha$ times the expected performance of the optimal algorithm for $\cms(\F)$ on $(\M_1,\dots,\M_n)$.

We now prove Corollary \ref{cor:main-poa}, yielding a Price of Anarchy of $1/2$ for this problem (or $1/2-\epsilon$ for a mechanism that runs in polynomial time).
We note that in all cases, the agents' equilibrium strategies can be computed in polynomial time.

\begin{proof}[Proof of Corollary \ref{cor:main-poa}]
The mechanism is a sequential posted-price mechanism.
We approach the agents in order $1,\dots,n$ one at a time and offer each $i$ a price $\tau_i$ for being included in the allocation.
We compute $\tau_i$ using Theorem \ref{thm:main-2}.
It is a unique best-response for $i$ to solve SAUP$(\M_i,\tau_i)$, i.e. to implement the current stage of our algorithm.
The performance of the algorithm is equal to the sum of agent utilities and payments, which implies the result.
We note that, to inefficiently obtain a factor exactly $\tfrac{1}{2}$, we can compute the ex-ante optimal policy exactly and use it with the matroid prophet algorithm constructed in Proposition \ref{prop:matroid-prophet} instead of using the approximation to OPT.
\end{proof}

\section{Related work} \label{sec:related}

This paper's technical contributions bridge the broad literature on prophet inequalities and mechanism design with the smaller but still substantial literature on Pandora's Box.
Conceptually, our contributions are also related to work on bandit superprocesses.

\subsection{Superprocesses and index theorems}

A superprocess or \emph{alternative decision process} consists of a set of Markov Decision Processes (MDPs), which generate costs or rewards with each action taken \citep{nash1973optimal}. 
This differs from our MSPs in the presence of \emph{rewards} as well as costs for advancement, rather than having a final reward for selection of a process.  
Nevertheless, our work is closely related conceptually to work on superprocesses, which generally focuses on sufficient conditions for index theorems (efficient local rules that enable globally optimal algorithms).
The most famous is the Gittins index~\citep{gittins1979bandit}, but many others along this line have been proposed and studied, some allowing for limited decisionmaking or other relaxations of the bandit setting~\citep{glazebrook1976profitability,whittle1980multi,nash1980generalized,glazebrook1982sufficient,granot1991optimal,glazebrook1993indices,dumitriu2003playing,keller2003branching}.
\citet{scully2025gittins}'s recent tutorial on the Gittins index also highlights the close connection to the Pandora's Box. 

Applications in AI and reinforcement learning include \citep{brown2013optimal,hadfield-menell2015multi}.
Our main question is relevant to this literature: we ask, in a somewhat different setting, how to construct robust and approximate local strategies when index theorems are unavailable.

\citet{guha2013approximation} considers generalizations of the Gittins setting in which an index theorem does not apply, obtaining constant-factor approximations using techniques such as linear programming relaxations.
\citet{ma2018improvements} obtains constant-factor approximation algorithms for broad structures of superprocesses in a finite-horizon setting.
These results can be related to versions of our problem with single selection and where actions always yield rewards, rather than costs as in our model.

\subsection{Prophet inequalities}

There is a substantial existing literature on prophet inequalities.
\citet{krengel1978semiamarts} first proposed prophet inequalities, and \citet{samuelcahn1984comparison} first proposed a thresholding algorithm.
Various extensions allowing for the selection of multiple items, particularly under matroid constraints, have been studied by  \citet{chawla2010multi}, \citet{yan2011mechanism}, \citet{kleinberg2012matroid}, \citet{feldman2016online}, \citet{dutting2020prophet}, and \citet{chawla2024non}.
\citet{lee2018optimal} used ex-ante relaxations to analyze prophet inequalities with matroid selection constraints, a technique we employ heavily.
Other extensions of prophet inequalities have also been analyzed.
\citet{rubinstein2017combinatorial} studies combinatorial valuations of the items, \citet{ezra2023prophet} studies a different objective (expected ratio rather than ratio of expectations), and \citet{immorlica2020prophet} studies correlated values.

Prophet inequalities have natural applications to incentive-compatible mechanism design, in particular posted price auctions (\citet{chawla2010multi} and subsequent literature; see \citet{lucier2017economic}).
\citet{alaei2022descending} and \citet{kleinberg2016descending} study connections to descending price auctions.
Many of the aforementioned papers discuss these economic connections, and \citet{lucier2017economic} provides a survey.

\subsection{Pandora's box}
The Pandora's Box problem -- and its polynomial-time, optimal solution -- were posed by \citet{weitzman1979optimal}.
For a full survey of Pandora's Box problems and recent developments, we refer to \citet{beyhaghi2024recent}.
Various models of multistage Pandora's Box, and similar settings, have been studied in \citet{guha2007information, aouad2020pandora,gupta2019markovian,ke2019optimal, bowers2024matching}.
Another well-studied variation on Pandora's Box which figures in our work is ``non-obligatory inspection,'' first posed by \citet{doval2018whether}. 
\citet{beyhaghi2019pandora} provides a $1-\frac{1}{e}$-approximation for the problem using techniques from \citet{asadpour2016maximizing}, and \citet{scully2025localhedgingapproximatelysolves} propose an approximation algorithm via a randomized ``committing" algorithm.
\citet{fu2023pandora} established that non-obligatory Pandora's Box is an NP-hard problem, unlike the original and multistage versions of the problem and, along with \citet{beyhaghi2023pandora} independently, provide a PTAS.
In Appendix \ref{app:hardness}, we extend the results of \citet{fu2023pandora} to prove hardness of \cms.  
Other extensions of Pandora's Box have been studied such as combinatorial selection constraints in \citet{singla2018price}, order constraints in \citet{boodaghians2020pandora}, correlations between boxes in \citet{chawla2020pandoras}, and applications to matching and mechanism design in \citet{immorlica2020information},  \citet{bowers2023high}, and \citet{bowers2024matching}.

Several recent works are especially close to our model and techniques. 

\citet{gupta2019markovian} studies what we call the \emph{bandits} special case of our problem in an offline setting, showing, among other results, that ``frugal'' algorithms for combinatorial optimization extend with the same approximation guarantees.
We focus on a significantly more general \msp{} model that allows at each time for a choice between multiple possible decisions with different costs and transition functions.
We do not know of works in this literature that have considered more adaptive decisionmaking processes such as our cabinets and DAGs models.

\citet{esfandiari2019online} considers a problem where Pandora boxes arrive online and apply prophet inequality techniques.
Their closest result to this paper involves a setting inspired by reinforcement learning in which there are a number of rounds, and in each round, a number of Pandora boxes arrive; only one can be opened per round.
This setting is a case of the $\cobs$ special case of our problem with each bandit consisting only of a Pandora box.
\citet{esfandiari2019online} makes several additional assumptions: the set of arriving boxes have the same parameters (i.e. cost and value distributions) in every round, and the benchmark ``optimal'' algorithm is restricted and not fully adaptive as in our paper.
Their constraint, that at most one box of each ``type'' can be claimed, is generally incomparable to our matroid constraint results, but in other respects their result can be viewed as a highly special case of Theorem \ref{thm:main-2}.
Our key technical tools, such as our ``matroid prophet inequality'' for the $\cors$ setting, our reductions from $\cms$ to $\cobs$, our use of the $\SAUP$ subroutine and threshold-based algorithms, and our extensive use of the \emph{ex-ante} benchmark, are not related to any of the techniques in \citet{esfandiari2019online}.

Independently to this work, \citet{chawla2025combinatorial} introduce the \emph{Costly Information Combinatorial Selection (CICS)} problem, which is analogous to our \cms{} problem while also encompassing minimization problems.  
They use a similar thresholding approach, but focus on bounding the ``commitment gap" of \cms, a quantity studied in several costly information settings. \robin{maybe point to some other commitment gaps/make this sentence better}
\citet{chawla2025combinatorial} does not consider prophet inequalities at all.

Finally, in subsequent work \citet{chawla2025commitment} extend the techniques of our paper's Section \ref{sec:inefficient} to achieve improved bounds on the commitment gap of the $\cms$ problem.

\subsubsection*{Acknowledgements}
This work was supported by the National Science Foundation award \#2329431.
The authors thank Bobby Kleinberg, Laura Doval, and participants at the University of Texas seminar for discussion and feedback.

\vfill
\break

\bibliographystyle{alphaurl}
\bibliography{citations}

\vfill
\break

\appendix

\section{Proofs for Section \ref{sec:inefficient}} \label{app:inefficient}

We prove Lemma \ref{lemma:alg-cobs-cms}, restated here for convenience.
\cobstocms*

\begin{proof}

Given a policy $\rho'$ on $\C_i$, construct a policy $\rho$ on $\M_i$ as follows.
For every deterministic policy $\pi$, let $\rho$ select $\pi$ with the same probability that $\rho'$ selects bandit $B_i^{\pi}$.
Whenever $\rho'$ advances from state $s$, take action $\pi(h)$ from the corresponding history $h$.
If $\rho'$ halts or claims on a state $s$, halt or claim, respectively, on the corresponding history $h$.
It follows from Definition \ref{def:bandit} that $\E[\Perf^{\pi}(\M_i)]$ = $\E[\Perf^{\rho'}(B^\pi_i)].$
Note that $\rho$ claims exactly when $\rho'$ respectively claims, so $\rho$ and $\rho'$ claim with the same probability.

Now, given a policy $\rho$ on $\M_i$, construct a policy $\rho'$ on $\C_i$ as follows.
Note that $\rho$ WLOG first draws a deterministic policy $\pi$ from some distribution and then follows $\pi$.
Construct $\rho'$ by drawing a deterministic policy $\pi$ according to $\rho$, selecting bandit $B_i^{\pi}$, advancing to a terminal state $s$, and claiming if $V(s)>0$.
If $V(s)=0$, claim if and only if $\pi$ claims on the corresponding history $h$.
By exactly the argument given above, $\E[\Perf^{\rho}(\M_i)]$ = $\E[\Perf^{\rho'}(B^\pi_i)]$ and $\rho'$ claims with the same probability as $\rho$.
\end{proof}

We now prove Proposition \ref{prop:cms-to-cobs}, restated here for convenience.
\propcmstocobs*

\begin{proof}
  Given an instance $\I$ of $\cms(\F)$, we construct $\I' = \MSPtoCOB(\I)$.
  Because $\ALG'$ is online, it executes an independent policy $\rho_i'$ on each $\C_i$ before either claiming $\C_i$ or proceeding to $C_{i+1}$.
  By Lemma \ref{lemma:alg-cobs-cms}, for each $\rho_i'$, there exists a policy $\rho_i$ on $\M_i$ such that $\E[\Perf^{\rho_i'}(\C_i)] = \E[\Perf^{\rho_i}(\M_i)]$.
  Let $\ALG$ be the online algorithm for $\cms(\F)$ that executes policy $\rho_i$ on MSP $\M_i$ for all $i\in[n]$.
  We have that
  \[
  \E[\Welf^{\ALG}(\I)] = \sum_{i=1}^n \E[\Perf^{\rho_i'}(\C_i)] = \sum_{i=1}^n \E[\Perf^{\rho_i}(\M_i)] = \E[\Welf^{\ALG'}(\I')].
  \]
  By Lemma \ref{lemma:alg-cobs-cms}, because $\ALG$ claims $\M_i$ exactly when $\ALG'$ claims $\C_i$, and $\ALG'$ is ex-post feasible, $\ALG$ is ex-post feasible.
  
  Recall that Lemma \ref{lemma:alg-cobs-cms} also tells us that, given a policy $\rho$ for $\M_i$, there exists a policy $\rho$ for $\C_i$ with the same performance and probability of claiming.
  By an identical argument to the above, given an algorithm $\ALG$ for $\cms(\F)$, there is a feasible algorithm $\ALG'$ for \cob{} Selection$(\F)$ with the same expected welfare.

  Now, let $\OPT$ and $\OPT'$ be the optimal ex-ante feasible algorithms for \cms{} and \cob{} Selection respectively.
  By Observation \ref{obs:ex-ante-indep}, each consists of independently executing policies $\rho_1,\ldots,\rho_n$ on $\M_1,\ldots,\M_n$ (in the case of $\OPT$)
  or $\rho_1',\ldots,\rho_n'$ on $\C_1, \ldots, \C_n$ (in the case of $\OPT'$.)
  Then, applying the above two arguments, there exists an ex-ante algorithm for $\cms(\F)$ with the same expected welfare as $\OPT'$, and there exists an ex-ante algorithm
  for \cob{} Selection$(\F)$ with the same expected welfare as $\OPT$.
  It follows that $\Welf^{OPT}(\I) = \Welf^{\OPT'}(\I')$.
  
  Given an online ex-ante $\alpha$-approximation algorithm $\ALG'$ for \cob{} Selection$(\F)$, we can construct an online algorithm $\ALG$ for $\cms(\F)$ with the same expected welfare.
  We have that
  \[
  \E[\Welf^{\ALG}(\I)] = \E[\Welf^{\ALG'}(\I')] \geq \alpha \cdot \E[\Welf^{\OPT'}(\I')] = \alpha \cdot \E[\Welf^{\OPT}(\I)],
  \]
  so $\ALG$ is an ex-ante $\alpha$-approximation algorithm for $\cms(\F)$.

\end{proof}

We construct a policy for \cob{} selection that is used in \ref{lemma:alg-cors-cobs}.

\begin{lemma}[non-exposed policy]\label{lemma:non-exposed-policy}
 For any bandit $B$ with capped value $\kappa$, given probability $q$, there exists a non-exposed policy $\rho$ on $B$ which claims with probability exactly $q$,
 and also claims exactly when the value of $\kappa$ is the top $q$ quantile of its distribution.
\end{lemma}

We construct $\rho$ to advance from state $s$ whenever $\sigma_s$ is in the top $q$ quantile of the distribution of $\kappa$.
Because our probability distributions are discrete, describing $\rho$ precisely requires a tie-breaking procedure.
We set a threshold $\tau$ based on the distribution of $\kappa$, and then choose, possibly randomly, between the following two policies:
either advance from the current state $s$ if and only if $\sigma_s > \tau$, or if and only if $\sigma_s \geq \tau$.
We call policies that, for some $\tau$, either advance if and only if $\sigma_s > \tau$ or advance if and only if $\sigma_s \geq \tau$ \emph{threshold-based}.
The concept of threshold-based algorithms is explored in detail in Section \ref{sec:efficient}.
We first prove that all threshold-based policies are non-exposed.

\begin{lemma}\label{lemma:non-exposed-threshold}
All threshold-based policies are non-exposed.
\end{lemma}

\begin{proof}
Suppose for contradiction that $\rho$ is exposed.
Then there is positive probability that $\rho$ advances from a state $s$ and later halts in a state $s'$ such that $\sigma_s < \sigma_{s'}$.
But since $\rho$ advanced from state $s$, we either have that $\tau \leq \sigma_s < \sigma_{s'}$, or $\tau < \sigma_s < \sigma_{s'}$.
In either case, $\tau < \sigma_{s'}$ and $\rho$ should have advanced from $s'$, a contradiction. 
\end{proof}

We now prove Lemma \ref{lemma:non-exposed-policy}. 
\begin{proof}[Proof of Lemma \ref{lemma:non-exposed-policy}]

  If there exists $v\in \text{Supp}(\kappa)$ such that $\Pr[\kappa\geq v] = q$, set a threshold $\tau = v$ and advance from state $s$ whenever $\sigma_s \geq \tau$.
  With probability exactly $q$, $\kappa = \min_s\{\sigma_s\} \geq \tau$, so $\sigma_s \geq \tau$ for all $\sigma_s$, and therefore we claim with probability exactly $q$.

  Otherwise, there is some $v\in \text{Supp}(\kappa)$ such that $\Pr[\kappa \geq v] > q \geq \Pr[\kappa > v]$\footnote{the second inequality is strict except when $q=0$, in which case $v$ is the maximum element of Supp$(\kappa)$ and $\Pr[\kappa > v]=0$.}
  Set $\tau=v$.
  With probability $\frac{q-\Pr[\kappa > \tau]}{\Pr[\kappa=\tau]}$, follow the threshold policy of advancing whenever $\sigma_s \geq \tau$.
  Otherwise, follow the threshold policy of advancing whenever $\sigma_s > \tau$.
  $\rho$ claims whenever $\kappa > \tau$, and when $\kappa=\tau$, claims with probability $\frac{q-\Pr[\kappa > \tau]}{\Pr[\kappa=\tau]}$.
  Therefore, the probability that $\rho$ claims is
  \[\Pr[\kappa > \tau] + \Pr[\kappa=\tau] \cdot \frac{q-\Pr[\kappa>\tau]}{\Pr[\kappa=\tau]} = \Pr[\kappa > \tau] + (q - \Pr[\kappa > \tau]) = q.\]

  $\rho$ claims with probability $q$, and moreover claims exactly when $\kappa$ is in its top $q$ quantile.
  By Lemma \ref{lemma:non-exposed-threshold}, $\rho$ is non-exposed, as it randomizes between threshold-based policies.

\end{proof}

\section{Proofs for Section \ref{sec:efficient}} \label{app:efficient}

\subsection{Proof of Proposition \ref{prop:ex-ante-approx} (Ex-ante FPTAS)} \label{subsec:ex-ante-approx-pf}

In this section, we give an algorithm to compute an $\epsilon$-approximate collection.
We recall the definition and the statement of Proposition \ref{prop:ex-ante-approx}.

\defepsapproxcoll*

\propexanteapprox*

We first recall the background and define some notation.
Recall that $\F$ is a matroid and the input to the problem are MSPs $\M_1,\dots,\M_n$.
We assume WLOG that $\{i\} \in \F$ for all $i$, as otherwise we can discard $i$ from the input.
Recall from Observation \ref{obs:ex-ante-indep} that the optimal ex-ante feasible policy, without loss of generality, commits to a statistically independent policy $\rho_i$ on each $\M_i$.

Given an ex-ante feasible algorithm whose output is a set $S$, we define the following functions for each $i \in [n]$: letting $\rho$ range over all possibly-randomized policies for $\M_i$,
\begin{align*}
  f_i(q) := & ~ \max_{\rho} ~ \E[\Perfofon{\rho}{i}]  \\
            & \quad \text{s.t.} \quad \Pr_{\rho}[i \in S] \leq q .
\end{align*}
The optimal ex-ante feasible algorithm solves $\max_{Q \in \PF} \sum_{i=1}^n f_i(Q_i)$.

Our main ingredient is the following algorithm.
\begin{proposition} \label{prop:ex-ante-MSP}
  There is an algorithm that, on input $c,\epsilon > 0$ and a MSP $\M_i$, runs in time polynomial in the description length of $\M$, in $\log\tfrac{1}{c}$, and in $\tfrac{1}{\epsilon}$, and produces an explicitly-represented, monotone, concave function $\hat{f}_i: [0,1] \to \reals$ such that, for all $q \in [0,1]$,
  \[ \frac{f_i(q)-c}{1+\epsilon} \leq \hat{f}_i(q) \leq (f_i(q) + c)(1+\epsilon) . \]
\end{proposition}

We now take Proposition \ref{prop:ex-ante-MSP} as given and prove Proposition \ref{prop:ex-ante-approx}.

\begin{algorithm}[\textsc{approx-collection}] \phantom{.} \bigskip \\
\textsc{approx-collection}$(\M_1,\dots,\M_n,\epsilon)$:
\begin{enumerate}
  \item Set $\epsilon' := \tfrac{1}{3} \min\left\{\epsilon, \tfrac{1}{2}\right\}$.
  \item Compute the expected performance of \SAUP$(\M_i,0)$ for each $i$, and let $W$ be the largest.
  \item Set the parameter $c := \frac{\epsilon' W}{2n}$.
  \item Run the algorithm of Proposition \ref{prop:ex-ante-MSP} on each $M_i$ to obtain $\hat{f}_i$.
  \item Solve the convex programming problem $Q := \arg\max_{Q \in \PF} \sum_{i=1}^n \hat{f}_i(Q_i)$.
  \item For each $i$, let $\hat{U}_i := \max\left\{0, ~ \frac{\hat{f}_i(Q_i)}{1+\epsilon} - c \right\}$.
  \item Return $(Q,\hat{U})$.
\end{enumerate}
\end{algorithm}

\begin{proof}[Proof of Proposition \ref{prop:ex-ante-approx}]
  \emph{(Running time.)}
  Computing the expected performance of \SAUP$(\M_i,0)$ takes polynomial time, as the expected performance is computed in the course of constructing the \SAUP{} oracle in Proposition \ref{prop:saup}.
  The algorithm of Proposition \ref{prop:ex-ante-MSP} is called $n$ times and runs in time polynomial in the length of the given MSP, in $\tfrac{1}{\epsilon}$, and in $\log\tfrac{1}{c} = \log \tfrac{2n}{\epsilon' W}$.
  The convex programming problem is solvable in polynomial time because it is a maximization of a concave function over a convex polytope, that, because it is a matroid polytope, has an efficient separation oracle~\citep{cunningham1984testing}.
  We note that, in addition to explicit representations of $\hat{f}_i$, we also have access to the gradients and supergradients of $\hat{f}_i$, a piecewise linear function (see proof of Proposition \ref{prop:ex-ante-MSP}).

  \emph{($\epsilon$-approximate collection.)}
  Let $\OPT^Q$ be the optimal ex-ante feasible algorithm that claims each $i$ with probability $Q_i$.
  First, we note that for any feasible $Q' \in \PF$, we have
  \begin{align}
    \OPT^Q
    &=    \sum_{i=1}^n f_i(Q) \nonumber \\
    &\geq \left(\frac{\sum_{i=1}^n \hat{f}_i(Q)}{1+\epsilon'}\right) - cn  & \text{Proposition \ref{prop:ex-ante-MSP}}  \nonumber \\
    &\geq \left(\frac{\sum_{i=1}^n \hat{f}_i(Q')}{1+\epsilon'}\right) - cn  & \text{definition of $Q$}  \nonumber \\
    &\geq \left(\frac{\sum_{i=1}^n f_i(Q')}{(1+\epsilon')^2}\right) - \frac{cn}{1+\epsilon'} - cn  & \text{Proposition \ref{prop:ex-ante-MSP}}  \nonumber \\
    &\geq (1-2\epsilon') \OPT^{Q'} - cn\frac{2+\epsilon'}{1+\epsilon'}   \nonumber \\
    &\geq (1-2\epsilon') \OPT^{Q'} - 2cn  \nonumber \\
    &\geq (1-2\epsilon') \OPT^{Q'} - \epsilon' W . \label{eqn:eps-appx-bound}
  \end{align}
  We first use (\ref{eqn:eps-appx-bound}) to derive that $\OPT^Q \geq W/2$.
  In particular, letting $\bar{Q}$ be the distribution putting probability $1$ on a particular $i$, we have $\OPT^{\bar{Q}} = \SAUP(\M_i,0) = W$.
  Therefore, 
  \begin{align*}
    \OPT^Q
    &\geq (1-2\epsilon') W - \epsilon' W  \\
    &=    (1-3\epsilon') W  \\
    &\geq \frac{W}{2},
  \end{align*}
  using that $\epsilon' \leq \tfrac{1}{6}$.
  We also have $\OPT \geq W$, as $W$ is the welfare of one feasible solution.

  We now show that $\OPT^Q$ is a $1-\epsilon$ approximation.
  Let $Q^*$ be the probabilities for the optimal algorithm, i.e. $\OPT = \OPT^{Q^*}$.
  By (\ref{eqn:eps-appx-bound}),
  \begin{align*}
    \OPT^Q
    &\geq (1-2\epsilon') \OPT - \epsilon' W  \\
    &\geq (1-3\epsilon') \OPT  \\
    &\geq (1-\epsilon) \OPT
  \end{align*}
  using that $\epsilon' \leq \epsilon/3$.
  Finally, we show that $\hat{U}$ approximates $U$ from below.
  First, by the guarantee of Proposition \ref{prop:ex-ante-MSP} and $U_i \geq 0$, we obtain $\hat{U}_i \leq f_i(Q_i) = U_i$ for all $i$.
  Second, we have
  \begin{align*}
    \sum_{i=1}^n \hat{U}_i
    &\geq \frac{\sum_{i=1}^n \hat{f}_i(Q_i)}{1+\epsilon'} - cn  & \text{definition of $\hat{U}$}  \\
    &\geq \frac{\sum_{i=1}^n f_i(Q_i)}{(1+\epsilon')^2} - \frac{cn}{1+\epsilon'} - cn  & \text{Proposition \ref{prop:ex-ante-MSP}}  \\
    &\geq (1-2\epsilon') \OPT^Q - 2cn  \\
    &=    (1-2\epsilon') \OPT^Q - \epsilon' W  \\
    &\geq (1-2\epsilon') \OPT^Q - \frac{\epsilon' \OPT^Q}{2}  \\
    &\geq (1-3\epsilon') \OPT^Q  \\
    &\geq (1-\epsilon) \OPT .
  \end{align*}
  We have shown the output is an $\epsilon$-approximate collection.
\end{proof}

\subsubsection{Proof of Proposition \ref{prop:ex-ante-MSP}}

Our algorithm will compute the following functions: \elias{Which algorithm? \textsc{approx-collection?}}
\begin{definition} \label{def:exante-dags-g}
  Given a MSP $\M = (S,A,P,C,V)$, define $g_s(q)$ and $g_{s,a}(q')$ as follows.
  At a terminal state $s$, let $g_s(q) = q V(s)$.
  At a non-terminal state $s$, let:
  \begin{align*}
    g_{s,a}(q') &= \max_{\{q_{s'}\}} \sum_{s'} P(s';a,s) g_{s'}(q_{s'})  & \text{s.t.} ~ \sum_{s'} P(s';a,s) q_{s'} = q'  \\
    g_s(q) &= \max_{\lambda,\{q_{s,a}\}} \sum_{a \in A^s} \lambda(a) \left[ g_{s,a}(q_{s,a}) - C(a,s) \right]  & \text{s.t.} ~ \sum_{a \in A^s} \lambda(a) q_{s,a} = q .
  \end{align*}
\end{definition}

For comparison, given any state $s$, let
 \[ f_s(q) = \max_{\bar{\pi}} \E[ \Perf^{\bar{\pi}}(s) ] ~~ \text{s.t.} ~~ \Pr[\text{$\bar{\pi}$ claims}] \leq q.  \]
\elias{Above equation was copy-pasted from EC draft, changed some notation but I think it may be quite wrong. $\Perf(s)$ isn't well-defined but whatever}
\begin{lemma} \label{lemma:exante-f-g}
  $f_s = g_s$.
\end{lemma}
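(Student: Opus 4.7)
The plan is to prove $f_s = g_s$ by backward induction on the topological order of the state DAG of $\M$. The inductive hypothesis at state $s$ will assert $f_{s'} = g_{s'}$ for every successor $s'$, and I will prove both inequalities at $s$ by matching randomized policies on $\M$ starting at $s$ with inner solutions of the nested optimization defining $g_s(q)$. The base case is immediate: at a sink $s$ there are no actions, so every policy simply chooses whether to claim, giving expected performance $pV(s)$ for claim probability $p$; maximizing over $p \leq q$ yields $f_s(q) = qV(s) = g_s(q)$ directly from Definition \ref{def:exante-dags-g}.

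For the direction $f_s(q) \leq g_s(q)$, I would take any randomized policy $\pi$ at $s$ with $\Pr[\pi \text{ claims}] \leq q$ and extract an inner feasible solution for $g_s(q)$ from its structure. Let $\lambda(a)$ be the marginal probability that $\pi$'s first action at $s$ is $a \in A^s \cup \{\bot\}$, where $\bot$ denotes immediately halting at $s$ (yielding zero reward since $V(s) = 0$ at a non-sink). For each such $a$ and each successor $s'$ with $P(s';a,s) > 0$, define $q_{s'} := \Pr[\pi \text{ claims} \mid \text{first action } a,\, \text{next state } s']$ and $q_{s,a} := \sum_{s'} P(s';a,s) q_{s'}$. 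The continuation of $\pi$ from $s'$ is itself a randomized policy with claim probability $q_{s'}$, so by induction its expected performance is at most $f_{s'}(q_{s'}) = g_{s'}(q_{s'})$. Summing over $a$ with weights $\lambda$ and subtracting $C(a,s)$ yields $\E[\Perfof{\pi}] \leq \sum_a \lambda(a)\bigl[\sum_{s'} P(s';a,s) g_{s'}(q_{s'}) - C(a,s)\bigr]$, which is a feasible value of the inner program defining $g_s(q)$ once we, if necessary, shift a bit of $\lambda$ mass onto $\bot$ to replace $\leq q$ by $= q$.

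For the reverse inequality $f_s(q) \geq g_s(q)$, I would take an optimal tuple $(\lambda^*, \{q_{s,a}^*\}, \{q_{s'}^*\})$ in the nested maximization defining $g_s(q)$ and explicitly build the corresponding randomized policy $\pi^*$: sample $a \sim \lambda^*$; if $a = \bot$, halt; otherwise pay $C(a,s)$, observe the transition to some $s'$, and recursively execute an optimal policy achieving $f_{s'}(q_{s'}^*) = g_{s'}(q_{s'}^*)$, which exists by induction. Its claim probability is exactly $\sum_a \lambda^*(a) \sum_{s'} P(s';a,s) q_{s'}^* = \sum_a \lambda^*(a) q_{s,a}^* = q$ and its expected performance equals $g_s(q)$ by linearity of expectation, so $f_s(q) \geq g_s(q)$.

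The main obstacles I expect are bookkeeping around the halt action $\bot$, which requires fixing a convention (e.g. $q_{s,\bot} = 0$ with zero contribution to the inner sum, since halting at a non-sink precludes claiming and yields zero reward), and verifying that the numbers $q_{s'}$ arising in the $\leq$ direction lie in $[0,1]$ so that $g_{s'}(q_{s'})$ is well-defined; this follows because they arise as genuine conditional probabilities. The monotonicity of $g_s$ used to tighten the constraint from $\leq q$ to $= q$ will follow from concavity (Lemma \ref{lemma:exante-dags-f-concave}) together with the trivial fact $g_s(0) = 0$, so nothing beyond what is already in the paper is needed to complete the argument.
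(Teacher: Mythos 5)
Your proposal follows essentially the same route as the paper: backward induction on the DAG, with the nested program for $g_s$ interpreted as decomposing a policy at $s$ into a first-action distribution $\lambda$ and conditional claim probabilities $q_{s,a}$, $q_{s'}$, and with both inequalities obtained by translating between policies and feasible inner solutions. The substance is right, and your explicit two-directional argument is, if anything, slightly more careful than the paper's prose.

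One step as you state it would not go through, though it is easily repaired. To pass from your extracted solution, which satisfies $\sum_a \lambda(a) q_{s,a} \leq q$, to feasibility for the equality-constrained program defining $g_s(q)$, you propose two fixes, and both are shaky. Shifting $\lambda$-mass onto $\bot$ with the convention $q_{s,\bot}=0$ moves the aggregate claim probability \emph{down}, not up, so it cannot close the gap to $=q$; what works instead is to let the padded mass \emph{claim} the worthless reward at the halt (i.e.\ take $q_{s,\bot}>0$, which costs nothing since $V(s)=0$ at a non-sink), or simply to invoke monotonicity of $g_s$ to conclude $g_s(q'') \leq g_s(q)$ for $q'' \leq q$. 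Your stated justification for that monotonicity is also flawed: concavity together with $g_s(0)=0$ does not imply that $g_s$ is nondecreasing (consider $q \mapsto q(1-2q)$), and citing Lemma \ref{lemma:exante-dags-f-concave} is circular, since that lemma about $f_i$ is itself proved via $f_s = g_s$. The correct reference is Lemma \ref{lemma:exante-g-concave}, which establishes concavity \emph{and} monotonicity of $g_s$ and $g_{s,a}$ directly by induction on $g$, independently of the identity $f_s = g_s$. With either of these patches your argument is complete and matches the paper's.
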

\begin{proof}
  By backward induction.
  At a terminal state $s$, the only two possibilities are to claim reward $V(s)$ or not claim.
  The policy that claims with probability $q$ has expected reward $q \cdot V(s)$.

  At a non-terminal state $s$, we first argue that $g_{s,a}(q')$ is the optimal expected performance achievable by any policy that begins after action $a$ has been chosen in state $s$, i.e. begins by drawing $s' \sim P(\cdot;a,s)$.
  Suppose by backward induction that at all possible transition states $s'$, i.e. where $P(s';a,s) > 0$, we have $g_s(q) = f_s(q)$.
  The optimal policy $\bar{\pi}$ at $s$ that is required to take action $a$ in state $s$ and must claim with probability at most $q'$ has, for each possible $s'$, some probability $q_{s'}$ of claiming conditioned on transitioning to state $s'$.
  It must satisfy $\sum_{s'} P(s';a,s) q_{s'} = q'$.
  Furthermore, for each state $s'$, it runs the optimal policy $\bar{\pi}_{s'}$ starting at state $s'$ that claims with probability $q_{s'}$.
  Therefore, $g_{s,a}(q')$ achieves that optimum, i.e. maximizes expected performance over all policies.

  Now, we argue $g_s(q) = f_s(q)$, i.e., $g_s$ optimizes expected performance subject to claiming with probability $q$.
  The optimal such policy must first choose to either halt (action $\bot$) or take an action $a \in A^s$, according to some probability distribution $\lambda$.
  In doing so, it incurs a cost $C(a,s)$.
  For each action $a$, this optimal policy has some probability $q_{s,a}$ of claiming conditioned on taking action $a$.
  It must satisfy $\sum_{a \in A^s \cup \{\bot\}} \lambda(a) q_{s,a} = q$.
  And conditioned on choosing $a$, this optimal policy continues by running the policy that maximizes expected performance among all policies that begin after action $a$ has been chosen in state $s$ and that claim with probability $q_{s,a}$.
  That is, it conditioned on choosing $a$ and paying $C(a,s)$, the optimal policy obtains $g_{s,a}(q_{s,a})$.
  Therefore, $g_s(q)$ achieves that optimum.
\end{proof}

\begin{lemma} \label{lemma:exante-g-concave}
  $g_{s,a}$ and $g_s$ are concave and monotone nondecreasing.
  Furthermore, $g_{s,a}(0) = g_s(0)$ while $g_{s,a}(1)$ and $g_s(1)$ are computable in polynomial time.
\end{lemma}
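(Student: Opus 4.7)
The plan is to prove all four claims---concavity of $g_{s,a}$ and $g_s$, monotonicity of each, the equality $g_{s,a}(0) = g_s(0)$, and polynomial-time computability at $q=1$---by a single backward induction on the DAG structure of $\M$. The base case is straightforward: at a sink $s$, $g_s(q) = q V(s)$ is linear (hence concave) and monotone nondecreasing since $V(s) \geq 0$; further, $g_s(0) = 0$ and $g_s(1) = V(s)$ is directly computable.

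For the inductive step, I would first establish concavity and monotonicity of $g_{s,a}$ assuming both properties for all successor functions $g_{s'}$. Concavity follows from a standard parametric-optimization fact: the value function of a concave maximization problem with a single linear equality constraint is concave in the right-hand side of that constraint. Explicitly, $\sum_{s'} P(s';a,s) g_{s'}(q_{s'})$ is concave in the vector $\{q_{s'}\}$ (a nonneg combination of concaves), the feasible region is the intersection of a box with a hyperplane whose offset is $q'$, and the usual ``convex-combine the two optimizers'' argument gives concavity in $q'$. Monotonicity follows because, given an optimizer at $q'_1 \leq q'_2$, we can componentwise increase the $q_{s'}$'s (each within $[0,1]$, using the slack $1 - q'_1 \geq q'_2 - q'_1$) to satisfy the equality at $q'_2$; by the inductive monotonicity of each $g_{s'}$, this weakly improves the objective.

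The main obstacle is concavity of $g_s$: the objective $\sum_a \lambda(a) g_{s,a}(q_{s,a})$ is bilinear in $\lambda$ and $q_{s,a}$, hence not jointly concave as written. I would handle this with the standard perspective change of variables $\mu(a) := \lambda(a) q_{s,a}$. The objective becomes $\sum_a \left[ \lambda(a) g_{s,a}(\mu(a)/\lambda(a)) - \lambda(a) C(a,s) \right]$, where each summand is the perspective of the (inductively concave) $g_{s,a}$ minus a linear term; perspectives of concave functions are jointly concave on the cone $\{(t,x) : t \geq 0,\; 0 \leq x \leq t\}$, so the full objective is concave in $(\lambda, \mu)$. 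The constraints $\lambda$ a probability distribution, $0 \leq \mu(a) \leq \lambda(a)$, and $\sum_a \mu(a) = q$ describe a convex set parametrized linearly by $q$, so the same parametric-optimization fact yields concavity of $g_s$ in $q$. Monotonicity of $g_s$ follows exactly as for $g_{s,a}$: given an optimizer at $q_1$, hold $\lambda$ fixed and increase $\mu$-coordinates to reach $q_2$, using that $t g_{s,a}(x/t)$ is monotone in $x$ for fixed $t$.

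For the boundary claims, $g_s(0) = g_{s,a}(0) = 0$: the constraints $\sum_{s'} P(s';a,s) q_{s'} = 0$ and $\sum_a \lambda(a) q_{s,a} = 0$ force the relevant $q_{s'}$ and $q_{s,a}$ to vanish (or $\lambda(a) = 0$), so the objectives collapse to $\sum_{s'} P(s';a,s) g_{s'}(0) = 0$ and $\sum_a \lambda(a) \bigl[-C(a,s)\bigr]$ respectively, the latter maximized by placing all mass on $\bot$ at cost zero. For $q=1$, the equality constraint pins every coordinate of positive weight to $1$, giving the recursions $g_{s,a}(1) = \sum_{s'} P(s';a,s) g_{s'}(1)$ and $g_s(1) = \max_{a \in A^s} \bigl[ g_{s,a}(1) - C(a,s) \bigr]$. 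Since the state graph is a finite DAG, a single topological sweep evaluates every $g_s(1)$ and $g_{s,a}(1)$ in polynomial time, completing the proof.
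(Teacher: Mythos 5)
Your proof is correct and follows the same backward-induction skeleton as the paper, but it differs in two places worth noting. For concavity of $g_s$, the paper sidesteps the bilinearity issue by observing that $g_s$ is exactly the concave envelope of the family $\{g_{s,a} - C(a,s) : a \in A^s \cup \{\bot\}\}$ (the same device as in Lemma \ref{lemma:cabinets-nonadaptive-opt}), which is concave and inherits monotonicity; you instead make the objective jointly concave via the perspective substitution $\mu(a) = \lambda(a) q_{s,a}$ and then apply the parametric value-function argument. Both are valid; your route is more self-contained and explicit, the paper's is shorter and reuses earlier machinery. For the ``furthermore'' clause, the paper computes $g_s(1)$ by noting it is the unconstrained optimum, i.e.\ the value of \SAUP{}$(\M,0)$, which \MAXSAUP{} solves in polynomial time; you instead derive an explicit recursion at $q=1$ and evaluate it by a topological sweep of the DAG, which is an equally good (and arguably more transparent) argument.

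One small correction to your $q=1$ recursion: the maximization defining $g_s$ in Definition \ref{def:exante-dags-g} ranges over $a \in A^s \cup \{\bot\}$, and the halt option $\bot$ (zero cost, reward $V(s)=0$ at a non-sink, claimed with probability $q_{s,\bot}$) must be retained. If every action satisfies $g_{s,a}(1) - C(a,s) < 0$, your recursion $g_s(1) = \max_{a \in A^s}\bigl[g_{s,a}(1)-C(a,s)\bigr]$ returns a negative number, whereas the true value is $0$ (halt immediately and ``claim'' the zero reward with probability one). Writing $g_s(1) = \max\bigl\{0,\ \max_{a \in A^s}\bigl[g_{s,a}(1) - C(a,s)\bigr]\bigr\}$ fixes this, and the rest of your argument is unaffected; your treatment of $q=0$ and your concavity/monotonicity arguments already account for $\bot$ correctly.
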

\begin{proof}
  To show that $g_{s,a}(1)$ and $g_s(1)$ are computable in polynomial time, observe that the performance of an algorithm that never claims cannot be above zero, while an algorithm that always claims can be implemented by \SAUP{}$(\M,0)$.
  (see Proposition \ref{prop:saup})

  Now, we again use backward induction.
  At a terminal state $s$, $g_s(q)$ is linear in $q$ and nondecreasing because $V(s) \geq 0$.
  Now consider a non-terminal state $s$.
  Suppose for each $s'$ reachable from $s$, $g_s$ is concave and monotone nondecreasing.
  Then $g_{s,a}$ is monotone nondecreasing because, if $q < q'$ and $\{q_{s'}\}$ is a valid solution to the maximization for $g_{s,a}(q)$, then there is a solution $\{q'_{s'}\}$ for $g_{s,a}(q')$ that satisfies $q'_{s'} \geq q_{s'}$ for all $s'$. The claim then follows by monotonicity of each $g_{s'}$.

  Next, we directly prove that $g_{s,a}(q)$ is concave for all $a$.
  Let $\alpha \in [0,1]$ and $q,q' \in [0,1]$.
  Let $\{q_{s'}\}$ and $\{q'_{s'}\}$ respectively solve the maximization objectives for $g_{s,a}(q)$ and $g_{s,a}(q')$.
  Then
  \begin{align}
    \alpha g_{s,a}(q) + (1-\alpha) g_{s,a}(q')
    &= \sum_{s'} P(s';a,s) \left[ \alpha g_{s'}(q_{s'}) + (1-\alpha) g_{s'}(q'_{s'}) \right]  \nonumber \\
    &\leq \sum_{s'} P(s';a,s) g_{s'}(\alpha q_{s'} + (1-\alpha) q'_{s'}) .  \label{eq:exante-gsa-concave}
  \end{align}
  And by taking a convex combination of the constraints, they satisfy
  \begin{align*}
    \alpha q + (1-\alpha)q'
    &= \alpha \sum_{s'} P(s';a,s) q_{s'} + (1-\alpha) \sum_{s'} P(s';a,s) q'_{s'}  \\
    &= \sum_{s'} P(s';a,s) \left(\alpha q_{s'} + (1-\alpha) q'_{s'} \right) .
  \end{align*}
  So the collection $\{\alpha q_{s'} + (1-\alpha) q'_{s'}\}$ is a feasible solution for $g_{s,a}(\alpha q + (1-\alpha)q')$.
  By (\ref{eq:exante-gsa-concave}), that collection gives a weakly higher value than $\alpha g_{s,a}(q) + (1-\alpha)g_{s,a}(q')$.
  As $g_{s,a}(\alpha q + (1-\alpha)q')$ is the maximum over all feasible collections, we have $g_{s,a}(\alpha q + (1-\alpha)q') \geq \alpha g_{s,a}(q) + (1-\alpha)g_{s,a}(q')$.
  
  Next, we consider $g_s(q)$.
  For each $a$, the function $q_{s,a} \mapsto g_{s,a}(q_{s,a}) - C(a,s)$ is a concave, monotone increasing function minus a constant, so it is still concave, monotone increasing.
  Now, $g_s$ is the concave closure of $\{g_{s,a} - C(a,s) : a \in A^s \cup \{\bot\}\}$.
  That is, it is the pointwise smallest concave function that lies weakly above every $g_{s,a}(q) - C(a,s)$ function.
  This implies that $g_s(q)$ is monotone increasing and concave.
\end{proof}

We have already shown that $f_i$ is concave:
\begin{lemma} \label{lemma:exante-dags-f-concave}
  Given a MSP $\M_i$, the function $f_i$ is concave.
\end{lemma}
\begin{proof}
  We have that $f_i = f_{s^*}$, where $s^*$ is the start state of the MSP.
  By Lemma \ref{lemma:exante-f-g}, $f_{s^*} = g_{s^*}$, and by Lemma \ref{lemma:exante-g-concave}, $g_{s^*}$ is concave.
\end{proof}

Next, we give an efficient approximation guarantee for $f_{s^*} = g_{s^*}$, proving Proposition \ref{prop:ex-ante-MSP}.

\begin{algorithm}[\textsc{approx-}$g$] \label{alg:approx-g} \phantom{.} \bigskip
\begin{itemize}
  \item Given: $\M = (S,A,P,C,V)$, $\epsilon > 0$. Let $m = |S|$. Define $\alpha = \epsilon/(4m)$.
  \item Parameter: $c < 1$.
  \item Output: for each $s$ and each $a \in A^s$, concave functions $\hat{g}_s, \hat{g}_{s,a} : [0,1] \to \reals$, each implemented as an efficient oracle for values and gradients.
  \item Let $b = \frac{c}{\max_s V(s)}$.
  \item Let $\Theta = (0, b, b(1+\alpha), b(1+\alpha)^2, \dots, 1 )$. 
        Alternatively, notate the set $\Theta = (\theta_0,\dots,\theta_T)$.
  \item By backward induction on the DAG structure:
  \begin{itemize}
    \item At a terminal state $s$, we set $\hat{g}_s(q) = q V(s)$.
    \item At a non-terminal state $s$, for each action $a$, we first compute $\hat{g}_{s,a}(q)$ for each $q \in P$ by solving its concave optimization problem of Definition \ref{def:exante-dags-g}, using as oracles $\hat{g}_{s'}$.
          We then ``iron'' $\hat{g}_{s,a}$ as follows.
          We set $\hat{g}_{s,a}(b) = \min\{\hat{g}_{s,a}(b), b \max_s V(s)\}$.
          We then enforce monotonicity by setting $\hat{g}_{s,a}(\theta_t) = \max\{\hat{g}_{s,a}(\theta_t), \hat{g}_{s,a}(\theta_{t-1})\}$ for $t=2,\dots,T$.
          We then define $\hat{g}_{s,a}(q)$ to be the concave envelope of $\hat{g}$, which we can compute and represent in closed form efficiently by taking the convex hull of the values computed above.
    \item Then, at the same non-terminal state $s$, we first compute $\hat{g}_s(q)$ for each $q \in P$ by solving its own concave optimization problem of Definition \ref{def:exante-dags-g}, using as oracles $\hat{g}_{s,a}$, and ironing in the same way.
  \end{itemize}
\end{itemize}
\end{algorithm}

\begin{lemma} \label{lemma:exante-dags-approx-gs}
  For all $s$ and $a$, for all $q \in [0,1]$, we have $\frac{g_{s,a}(q) - c}{1+\epsilon} \leq \hat{g}_{s,a}(q) \leq (g_{s,a}(q) + c)(1+\epsilon)$ and $\frac{g_{s}(q) - c}{1+\epsilon} \leq \hat{g}_{s}(q) \leq (g_{s}(q) + c)(1+\epsilon)$.
\end{lemma}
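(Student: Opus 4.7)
The proof plan is a backward induction on the DAG structure of $\M$. Letting $\ell(s)$ be the length of the longest path from $s$ to a sink, so $\ell(s) \leq m := |S|$, I would strengthen the inductive hypothesis to the depth-indexed bound
\[ \frac{g_s(q) - c(1+\alpha)^{\ell(s)}}{(1+\alpha)^{\ell(s)}} \,\leq\, \hat{g}_s(q) \,\leq\, (1+\alpha)^{\ell(s)}\bigl(g_s(q) + c(1+\alpha)^{\ell(s)}\bigr) \]
(and analogously for $\hat{g}_{s,a}$), for all $q \in [0,1]$. Since $\alpha = \epsilon/(2m)$ gives $(1+\alpha)^m \leq e^{\epsilon/2} \leq 1+\epsilon$, this implies the stated bounds after absorbing the resulting $O(\epsilon)$ inflation on $c$ into a slightly smaller initial choice of $b$. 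The base case, $s$ a sink, is immediate since $\hat{g}_s = g_s$ exactly.

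For the inductive step at a non-sink $s$ I would separate the analysis into two phases: the grid-point computation, and the ironing/envelope step. At a grid point $\theta_t \in \Theta$, invoking the inductive upper bound term-wise on $\sum_{s'} P(s';a,s)\hat{g}_{s'}(q_{s'})$ yields $\hat{g}_{s,a}(\theta_t) \leq (1+\alpha)^{\ell(s)-1}(g_{s,a}(\theta_t)+c(1+\alpha)^{\ell(s)-1})$, while plugging the optimizer $\{q_{s'}^*\}$ of $g_{s,a}(\theta_t)$ into the max defining $\hat{g}_{s,a}(\theta_t)$ gives the matching lower bound. In other words, the grid-point computation preserves the invariant exactly with no increase in depth, and the $\hat{g}_{s,a}\to\hat{g}_s$ step uses the same argument on the outer max over actions.

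The main obstacle is verifying that the ironing and concave-envelope step accounts precisely for the one additional level of $(1+\alpha)$. The cap $\hat{g}_{s,a}(b) \leftarrow \min\{\hat{g}_{s,a}(b), b\max_s V(s)\}$ is harmless because one can show by a parallel induction that $g_{s'}(q) \leq q\max_s V(s)$ at every state, so $g_{s,a}(b) \leq c$, making the upper bound automatic after capping and the lower bound vacuous at $q = b$. The monotonicity ironing only raises values, consistent with the monotonicity of $g_{s,a}$ established in Lemma \ref{lemma:exante-g-concave}. For the concave envelope, the upper bound at any $q$ is a convex combination of two grid values, and by concavity of $g_{s,a}$ that convex combination lies below $(1+\alpha)^{\ell(s)-1}(g_{s,a}(q)+c(1+\alpha)^{\ell(s)-1})$. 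The lower bound is the delicate case: for $q < b$ it is vacuous since $g_{s,a}(q) \leq q\max_s V(s) < c$, while for $q \geq b$ the geometric spacing of $\Theta$ does the work. Letting $\theta_{t-1}$ be the largest grid point at or below $q$, we have $\theta_{t-1} \geq q/(1+\alpha)$, and since $g_{s,a}$ is concave with $g_{s,a}(0)=0$ one gets $g_{s,a}(\theta_{t-1}) \geq g_{s,a}(q)/(1+\alpha)$. Applying the inductive lower bound at $\theta_{t-1}$ and the monotonicity of $\hat{g}_{s,a}$ then gives $\hat{g}_{s,a}(q) \geq \hat{g}_{s,a}(\theta_{t-1}) \geq \frac{g_{s,a}(q)/(1+\alpha) - c(1+\alpha)^{\ell(s)-1}}{(1+\alpha)^{\ell(s)-1}}$, which after rearrangement is exactly the depth-$\ell(s)$ invariant. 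The analogous treatment of $\hat{g}_s$ (one more grid-point step, cap, monotonize, envelope) completes the induction and hence the lemma.
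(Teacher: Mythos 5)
Your strategy is the same as the paper's: backward induction over the DAG, (near-)exact computation at the geometric grid $\Theta$, concavity plus monotonicity plus the geometric spacing of $\Theta$ to control off-grid error by a $(1+\alpha)$ factor, and the cap/Lipschitz argument ($g_{s,a}(q)\le q\max_s V(s)\le c$ for $q\le b$) to handle the region below $b$. The grid-point argument (term-wise use of the inductive bound for the upper direction, plugging the true optimizer $\{q_{s'}^*\}$ into the approximate program for the lower direction), the treatment of the cap and of monotonization, and the lower-bound step $\hat{g}_{s,a}(q)\ge \hat{g}_{s,a}(\theta_{t-1})$ with $g_{s,a}(\theta_{t-1})\ge g_{s,a}(q)/(1+\alpha)$ all match the paper's proof.

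However, the depth-indexed invariant as you state it does not close. Your claim is that both $\hat{g}_{s,a}$ and $\hat{g}_s$ satisfy the bound with exponent $\ell(s)$, with ``one additional level of $(1+\alpha)$'' per state coming from the ironing/envelope step. But each state contributes \emph{two} envelope/ironing stages, and the second one cannot reuse the grid-point advantage: when $\hat{g}_s(\theta_t)$ is computed, the inner program evaluates $\hat{g}_{s,a}(q_{s,a})$ at arbitrary, generally off-grid, points $q_{s,a}\in[0,1]$, for which (after its own envelope step) $\hat{g}_{s,a}$ only satisfies the exponent-$\ell(s)$ bound. Hence the grid values of $\hat{g}_s$ already carry exponent $\ell(s)$, and its own envelope/geometric-spacing step for off-grid $q$ adds one more factor, giving $(1+\alpha)^{\ell(s)+1}$ and contradicting the claimed exponent $\ell(s)$. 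This is exactly why the paper runs the induction on a level counter that increments at \emph{both} the $g_{s,a}$ and the $g_s$ stages (so up to $2m$ levels), and why it tracks separate exponents for $q\in\Theta$ versus $q\notin\Theta$. The slip is purely bookkeeping: replace your exponent $\ell(s)$ by (roughly) $2\ell(s)$, or adopt the paper's alternating counter, and note that with $\alpha=\Theta(\epsilon/m)$ (taking the constant a bit smaller if needed, or absorbing it as you already propose) one still gets $(1+\alpha)^{2m}\le 1+O(\epsilon)$. With that correction your argument is the paper's argument.
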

\begin{proof}
  We use backward induction.
  If $s$ is a terminal state, $g_s$ is at level $0$ of the induction.
  Any $g_{s,a}$ where the maximum level is $k-1$ over all $s'$ with $P(s';a,s) > 0$, is at level $k$.
  Similarly, $g_s$ where the maximum level is $k-1$ over all $g_{s,a}$ is at level $k$.
  We use backward induction to prove that, at the $k$th level of induction, the following hold:
  \begin{align}
    q \in \Theta \quad &\implies \quad \frac{g_{s,a}(q) - c}{(1+\alpha)^{k-1}} \leq \hat{g}_{s,a}(q) \leq (\hat{g}_{s,a}(q) + c)(1+\alpha)^{k-1}  \label{eq:exante-dags-gsa-appx-1} \\
    q \not\in\Theta \quad &\implies \quad \frac{g_{s,a}(q) - c}{(1+\alpha)^{k}} \leq \hat{g}_{s,a}(q) \leq (\hat{g}_{s,a}(q) + c)(1+\alpha)^{k}   \label{eq:exante-dags-gsa-appx-2}\\
    q \in \Theta \quad &\implies \quad \frac{g_{s}(q) - c}{(1+\alpha)^{k-1}} \leq \hat{g}_{s}(q) \leq (\hat{g}_{s}(q) + c)(1+\alpha)^{k-1}  \\
    q \not\in\Theta \quad &\implies \quad \frac{g_{s}(q) - c}{(1+\alpha)^{k}} \leq \hat{g}_{s}(q) \leq (\hat{g}_{s}(q) + c)(1+\alpha)^{k} .
  \end{align}
  Base case: at a terminal state $s$, $\hat{g}_s = g_s$, i.e. zero error.

  Now consider a non-terminal state at level $k$ of the induction.
  Consider $g_{s,a}(q)$ (Definition \ref{def:exante-dags-g}).
  First fix any $q \in \Theta$ and consider the values initially computed by the algorithm before ironing.
  Let $h_{s,a}(\{q_{s'}\}) = \sum_{s'} P(s';a,s) g_{s'}(q_{s'})$.
  Consider $\hat{h}_{s,a}(\{q_{s'}\}) = \sum_{s'} P(s';a,s) \hat{g}_{s'}(q_{s'})$.
  Because it is a convex combination, we have
    \[ \frac{h_{s,a}(\{q_{s'}\}) + c}{(1+\alpha)^{k-1}} \leq \hat{h}_{s,a}(\{q_{s'}\}) \leq (h_{s,a}(\{q_{s'}\}) + c)(1+\alpha)^{k-1} . \]
  This implies the same for $\hat{g}_{s,a}(q)$, i.e. (\ref{eq:exante-dags-gsa-appx-1}).
  \bo{could explain more.}

  Now, consider values computed at $q \in \Theta$ after enforcing monotonicity and taking the concave envelope.
  If $q=0$, then $\hat{g}_{s,a}(0) = g_{s,a}(0) = 0$.
  If $q=b$, then we either do not change $\hat{g}_{s,a}(b)$ or decrease it to equal $b \max_s V(s)$.
  As this is the maximum possible value of $g_{s,a}(b)$, we have not worsened the approximation guarantee of (\ref{eq:exante-dags-gsa-appx-1}).

  For other $q \in \Theta$, we maintain the guarantee (\ref{eq:exante-dags-gsa-appx-1}), as follows.
  In case 1, $\hat{g}_{s,a}(q)$ has not changed and (\ref{eq:exante-dags-gsa-appx-1}) holds.
  In case 2, $\hat{g}_{s,a}(q)$ has been set equal to $\hat{g}_{s,a}(q/(1+\alpha))$ to maintain monotonicity.
  In this case, since the value has been raised, it still lies above $g_{s,a}(q)(1-\alpha)$, and we still have $\hat{g}_{s,a}(q) \leq \hat{g}_{s,a}(q/(1+\alpha)) \leq g_{s,a}(q/(1+\alpha))(1+\alpha)^{k-1}  \leq g_{s,a}(q)(1+\alpha)^{k-1}$ by induction on $q \in \Theta$.
  In case 3, $\hat{g}_{s,a}(q)$ has changed to be larger so that it equals the convex combination of $\hat{g}_{s,a}$ at some $q',q'' \in \Theta$.
  Then the approximation applies to $\hat{g}_{s,a}(q'), \hat{g}_{s,a}(q'')$ and we obtain (\ref{eq:exante-dags-gsa-appx-1}) as well.

  Now fix any $q \in [b,1], q \not\in \Theta$.
  Let $x < q < y$ for $x,y \in \Theta$ with $y = x(1+\alpha)$.
  Note that because $g_{s,a}$ is concave and monotone, and because $g_{s,a}(0) = 0$, we have $dg_{s,a}(q) \geq g_{s,a}(q)/q$, where $dg_{s,a}$ is any supergradient.
  This implies $g_{s,a}(y) \leq g_{s,a}(x)(y/x) = g_{s,a}(x)(1+\alpha)$.
  By monotonicity, $g_{s,a}(q) \leq g_{s,a}(y) \leq g_{s,a}(x)(1+\alpha) \leq \hat{g}_{s,a}(x)(1+\alpha)^k \leq \hat{g}_{s,a}(q)(1+\alpha)^k$.
  Similarly, $g_{s,a}(q) \geq g_{s,a}(x) \geq g_{s,a}(y)/(1+\alpha) \geq \hat{g}_{s,a}(y)/(1+\alpha)^k \geq \hat{g}_{s,a}(q)/(1+\alpha)^k$.

  Finally, consider any $q \in (0,b)$.
  We claim that $g_{s,a}$ and $g_s$ are $\max_s V(s)$ Lipschitz.
  The result follows because each $g_{s,a}$ cannot have a steeper slope than one of the $g_{s'}$ than it depends on, and so forth, and the steepest slope at any terminal state is $\max_s V(s)$.
  It therefore holds that $g_{s,a}(q) \in [0, b \max_s V(s)]$, and we also have [$\hat{g}_{s,a}(q) \in 0, b \max_s V(s)]$.
  So (\ref{eq:exante-dags-gsa-appx-2}) holds, even if we remove the $(1+\alpha)^k$ term.

  We now repeat the argument at $\hat{g}_s$.
  Although the optimization problem is slightly different, the arguments are all exactly the same.
  In particular, we just utilize $h_{s}(\lambda,\{q_{s,a}\}) = \sum_a \lambda(a) g_{s,a}(q_{s,a})$ and $\hat{h}_s$ in the exactly analogous way, and then the entire argument is the same.
  This completes the induction.
  
  Finally, as $|S|=m$, there can be at most $2m$ levels of induction, so using that $(1+\alpha)^{2m} \leq e^{2\alpha m} = e^{\epsilon/2} \leq (1+\epsilon)$ completes the proof.
\end{proof}

\begin{proof}[Proof of Proposition \ref{prop:ex-ante-MSP}]
  Lemma \ref{lemma:exante-dags-approx-gs} implies the accuracy result.
  Considering the running time of the algorithm, for each state and action, we build a piecewise linear concave function $\hat{g}_{s,a}$ with $T+2$ pieces, where $T$ is the floor of the solution to $b(1+\alpha)^T = 1$, i.e. $T = O(\tfrac{1}{\alpha}\ln(\tfrac{1}{b})) = O(\tfrac{|S|}{\epsilon}\ln(\tfrac{\max_s V(s)}{b}))$.
  For each piece of the piecewise function, we solve a concave optimization problem subject to a linear constraint, using constant-time oracles for concave functions (including access to supergradients) that have already been constructed, and do polynomial-time post-processing \elias{too vague?}.
  The construction of $\hat{g}_s$ is efficient by a similar argument.
  We thus build a polynomial number of functions, each taking polynomial time.
\end{proof}

\subsubsection{The FPTAS}

We now show that the analysis above yields a Fully-Polynomial Time Approximation Scheme (FPTAS) for the optimal ex-ante feasible algorithm for CMS$(\F)$, for any matroid $\F$.
(In fact, the result extends to any downward-closed constraint for whose feasibility polytope there exists an efficient separation oracle.)

\begin{proposition} \label{prop:fptas-itself}
  There is an algorithm running in time polynomial in the lengths of $\M_1,\dots,\M_n$ and in $\frac{1}{\epsilon}$ that, for any $\epsilon > 0$, guarantees a $1-\epsilon$ approximation to the optimal ex-ante feasible algorithm for CMS$(\F)$.
\end{proposition}
\begin{proof}[Sketch]
  Our algorithms compute a vector $Q \in \PF$ such that, if we optimally interact with each $\M_i$ subject to claiming with probability $Q_i$, then we achieve a $1-\epsilon$ approximation to the problem $\max_Q \sum_{i=1}^n \hat{f}_i(Q)$.
  Within each $\M_i$, at the source state $s$, we compute a probability distribution $\lambda_s$ over available actions and a probability $q_{s,a}$ for each action such that \emph{(a)} $\sum_a \lambda_s(a) q_{s,a} = Q_i$, and \emph{(b)} it is optimal to draw an action $a$ from $\lambda_s$, then proceed optimally subject to claiming with probability $q_{s,a}$.
  Our algorithm recursively computes such a distribution at every intermediate state, so it implements the optimal algorithm for the given $Q$ and approximations $\hat{f}_i$.
  We lose another factor of $1-\epsilon$ because $\hat{f}_i$ only approximates $f_i$.
  \bo{There's something orphaned in the previous submission about kappa interpretations for ex-ante opt.}
\end{proof}

\subsection{Proof of Proposition \ref{prop:matroid-prophet}} \label{subsec:matroid-prophet-pf}

In this section, we prove:

\propmatroidprophet*

We will reduce to the simpler Choice-Over-Rewards Selection setting.
Recall that in this setting, each alternative $i$ is in the form of a ``cabinet'' with some number of closed ``drawers''.
Each drawer $j$ contains a reward $X_i^j$, and only one drawer may be opened.
The rewards in the drawers may be correlated.
We summarize the input as the collection $(X_i^j)$, representing a list of random variables ranging over all $i \in [n]$ and each ``drawer'' $j$ of each $i$.

Our algorithm will interact with the instance through two oracles.
The first is a \SAUP{} oracle, as we will construct threshold-based algorithm.
The second is an $\epsilon$-approximate collection $(Q,\hat{U})$ that describes an approximate solution to the ex-ante optimum.
We now note how these definitions apply in the special case of the \cor{} Selection setting.

\begin{definition}[Threshold-based, \cor{} Selection] \label{def:cor-threshold-based}
  In the \cor{} Selection setting, an online algorithm is threshold-based (Definition \ref{def:saup}) if it follows the following format for each arrival $i$:
  \begin{itemize}
    \item Set a threshold $\tau_i$.
    \item Solve the Single-Agent Utility Problem on $i$ with $\tau_i$:
    \begin{itemize}
      \item Pick the drawer $j$ maximizing $\E[(X_i^j - \tau)^+]$, open it, and claim $i$ if and only if $X_i^j \geq \tau$.
    \end{itemize}
  \end{itemize}
\end{definition}

The definition of an $\epsilon$-approximate collection is also the same, but we state it in the \cor{} Selection setting for convenience.
\begin{definition}[$\epsilon$-approximate collection, \cor{} Selection] \label{def:cor-eps-approx-coll}
  Given an input $(X_i^j)$ to the ex-ante \cor{} Selection$(\F)$ problem, an $\epsilon$-approximate collection (Definition \ref{def:eps-approx-coll}) is a pair of vectors $(Q,\hat{U})$ satisfying the following.
  First, $Q \in \PF$, the feasible polytope.
  Second, there exists an ex-ante feasible algorithm $\OPTHAT$ that is a $1-\epsilon$ approximation and satisfies $Q_i \geq \Pr[i \in \OPTHAT]$ for all $i$.
  Third, letting $U_i$ be the expected utility $\OPTHAT$ obtains on $i$, $\hat{U}$ approximates $U$ from below, i.e. $\hat{U}_i \leq U_i$ and $\sum_{i=1}^n \hat{U}_i \geq (1-\epsilon)\sum_{i=1}^n U_i$.
\end{definition}

We now give our ``matroid prophet'' algorithm for the \cor{} Selection setting.
The input is the instance $(X_i^j)$, an $\epsilon$-approximate collection $(Q,\hat{U})$ along with an oracle for \SAUP{}.
The following algorithm and analysis contains many ideas from existing matroid prophet algorithms such as those contained in \citet{kleinberg2012matroid,lee2018optimal}, but with some subtle changes as well.

First, some notation.
We will define a set of random variables $Z = (Z_1,\dots,Z_n)$.
For that set, define for any $S \in \F$:
  \[ R(S) := \max_{S'} \left\{\sum_{i \in S'} Z_i  ~ \middle| ~ S' \cap S = \emptyset, S' \cup S \in \F \right\} . \]
$R(S)$ may be interpreted as the optimal total value that can be added to $S$ while remaining feasible, with value measured by $Z$.

\begin{algorithm}[\mp{}] \label{alg:matroid-prophet}
The \mp{} algorithm takes an $\epsilon$-approximate collection $(Q,\hat{U})$ for some $\epsilon \geq 0$ and an oracle for the $\SAUP$ problem.
The algorithm is defined in both the $\cms$ and $\cors$ settings because it only interacts with the arrivals $\M_1,\dots,\M_n$ through the given $\SAUP$ oracle.

\bigskip
$\mp((X_i^j), (Q,\hat{U}), \SAUP(\cdot, \cdot))$:
\begin{itemize}
  \item Define $\hat{z}_i := \frac{\hat{U}_i}{Q_i}$.
  \item Compute a distribution $D_Q$ on $2^{[n]}$ satisfying $Q = \E_{F \sim D_Q} 1_F$, where $1_F$ is the indicator vector for $i \in F$.
  \item Define $Z$ as follows: sample $F \sim D_Q$, then set $Z_i = \hat{z}_i \Indic{i \in F}$.
  \item Initialize $A_0 = \emptyset$, and for each arrival $i=1,\dots,n$:
  \begin{itemize}
    \item If $A_{i-1} \cup \{i\} \not\in \F$, the set $\tau_i = \infty$, discard $i$, set $A_i = A_{i-1}$, and continue to $i+1$.
    \item Otherwise, set $\tau_i = \tfrac{1}{2} \E_{Z} \left[ R(A_{i-1}) - R(A_{i-1} - A_i) \right]$.
    \item Run \SAUP{}$(i, \tau_i)$. If it claims, set $A_i = A_{i-1} \cup \{i\}$, else set $A_i = A_{i-1}$.
  \end{itemize}
\end{itemize}
\end{algorithm}

\begin{lemma} \label{lemma:matroid-prophet-efficient}
  If $\F$ is a matroid, then the algorithm runs in polynomial time.
\end{lemma}
\begin{proof}
  We can Computed $D_Q$ in polynomial time with convex programming, obtaining an explicit representation of $D_Q$ (i.e. a probability distribution as a vector of probabilities, necessarily of at most polynomial size).
  This is because there is an efficient separation oracle for any matroid polytope~\citep{cunningham1984testing}, and by using such an oracle, one can obtain $\D_Q$ explicitly in polynomial time for any polytope (e.g. Theorem 4 of \citet{cai2017constructive}, citing older work).

  Other than solving the \SAUP{} problem, for which we have an oracle, the only other nontrivial step is computing $\tau_i$.
  Because $D_Q$ is explicitly represented, we simply sum over the support of $D_Q$ and enumerate all realizations of $F \sim D_Q$, which determines all possible realizations of the vector $Z$, in order to compute the sum.
\end{proof}

We now turn to analyzing the approximation guarantee.
For this, we first introduce more notation.
Given the instance $(X_i^j)$ and the vector $Q \in \PF$, define $U_i$ to be the utility obtained on arrival $i$ by $\OPT^Q$, the optimal algorithm that claims each $i$ with probability at most $Q_i$.
Define $z_i = \frac{U_i}{Q_i}$.
Recall that by definition of an $\epsilon$-approximate collection, we have $\hat{U}_i \leq U_i$ for all $i$.

The algorithm $\OPT^Q$, WLOG, opens a drawer $j^*(i) \sim \lambda$ for each arrival $i$ according to some fixed probability distribution $\lambda$, independently of all other arrivals.
Let $X_i := X_i^{j^*(i)}$, a random variable representing the reward\footnote{We note that the remainder of the analysis from this point would be much more cumbersome in the CMS setting as opposed to the \cor{} Selection setting that we are considering.} observed by $\OPT^Q$ on arrival $i$.

Reconsider the random variables $F$ and $Z$ defined in the algorithm.
For the purpose of analysis, we additional define the ``re-randomized'' random variables $Y = (Y_1,\dots,Y_n)$ as follows: If $i \in F$, then draw $Y_i$ from the distribution of $X_i$ conditioned on being in its top $Q_i$ quantile; otherwise, draw $Y_i$ from the distribution of $X_i$ conditioned on not being in the top $Q_i$ quantile. 

We note that $Y_i$ has the same marginal distribution as $X_i$, and that $\E[Y_i \mid i \in F] = \frac{U_i}{Q_i} = z_i$.

For the purpose of analysis, we also consider an independent copy $(F',Z')$ of $(F,Z)$ drawn from the same distribution.
Define $R'(S)$ analogously to $R(S)$, but for $Z'$.

\begin{lemma} \label{lemma:matroid-prophet-approx}
  For any matroid $\F$, \mp{} run on an $\epsilon$-approximate collection $(Q,\hat{U})$ with a \SAUP{} oracle has an expected performance of at least $\frac{1}{2} \sum_{i=1}^n \hat{U}_i$.
\end{lemma}
\begin{proof}
  For shorthand, write $X := (X_i^j)$.
  Let $j(i)$ denote the drawer of arrival $i$ opened by our algorithm.
  Let $\text{Util}_i(X) = (X_i^{j(i)} - \tau_i)^+$, interpreted as the utility of agent $i$ who opens drawer $j(i)$ and pays $\tau_i$ for the reward iff it exceeds $\tau_i$.
  Let $\text{Util}(X) = \sum_{i=1}^n \text{Util}_i(X)$.
  Note that because $i \in A_n \iff X_i^{j(i)} \geq \tau_i$, where $A_n$ is the algorithm's claimed set, we also have $\text{Util}(X) = \sum_{i \in A_n} \text{Util}_i(X)$.
  Therefore, for any fixed realization of $X$,
  \begin{align}
    \text{Welf}^{\text{ALG}}
    &= \left(\sum_{i \in A_n} \tau_i\right) ~ + ~ \text{Util}(X)  \nonumber \\
    &= \frac{1}{2} \E_Z \sum_{i \in A_n} \left( R(A_{i-1}) - R(A_{i-1} \cup \{i\}) \right) ~ + ~ \text{Util}(X)  \nonumber \\
    &= \frac{1}{2} \E_Z \sum_{i \in A_n} \left( R(A_{i-1}) - R(A_{i}) \right) ~ + ~ \text{Util}(X)  \nonumber \\
    &= \frac{1}{2} \E_Z \left[ R(A_{0}) - R(A_{n}) \right] ~ + ~ \text{Util}(X) .  \label{eqn:mp-breakdown}
  \end{align}
  We first lower-bound the second term, $\text{Util}(X)$.
  First, we use the definition of \SAUP{} to observe that $\text{Util}_i(X) \geq \E \left(X_i - \tau_i\right)^+$.
  This uses independence across arrivals and holds for all fixed realizations of all prior arrivals $i'=1,\dots,i-1$ and all choices of $\tau_i$, so it holds in expectation.
  
  Next, we use the fact that $X_i$ is independent of $\tau_i$ and that $Y_i$ has the same marginal distribution to conclude that $\E (X_i - \tau_i)^+ = \E (Y_i - \tau_i)^+$.
  Therefore, we have $\text{Util}_i(X) \geq \E (Y_i - \tau_i)^+$.

  Now, given realizations of $X,F,Z$, define $V = \arg\max_{S'} \left\{ \sum_{i \in S'} Z_i \middle| S' \cap A_n = \emptyset, S' \cup A_n \in \F \right\}$.
  In other words, $V$ is the set whose total weight is $R(A_n)$.
  We will use that, WLOG, $i \in V$ iff $Z_i = \hat{z}_i$, as otherwise $Z_i = 0$.
  \begin{align}
    \E \text{Util}(X)
    &\geq \E \sum_{i=1}^n (Y_i - \tau_i)^+  \nonumber \\
    &\geq \E \sum_{i \in V} (Y_i - \tau_i)^+  \nonumber \\
    &\geq \E \sum_{i \in V} (Y_i - \tau_i)  \nonumber \\
    &=    \E \sum_{i \in V} Y_i ~ - ~ \E \sum_{i \in V} \tau_i  \nonumber \\
    &=    \E \sum_{i \in V} \E[ Y_i \mid i \in F] ~ - ~ \E \sum_{i \in V} \tau_i  \nonumber \\
    &=    \E \sum_{i \in V} z_i ~ - ~ \E \sum_{i \in V} \tau_i  \nonumber \\
    &\geq \E \sum_{i \in V} \hat{z}_i ~ - ~ \E \sum_{i \in V} \tau_i  \nonumber \\
    &=    \E \sum_{i \in V} Z_i ~ - ~ \E \sum_{i \in V} \tau_i  \nonumber \\
    &=    \E R(A_n) ~ - ~ \E \sum_{i \in V} \tau_i  .  \label{eqn:mp-util} \\
  \end{align}

  Plugging (\ref{eqn:mp-util}) into (\ref{eqn:mp-breakdown}), we obtain
  \begin{align}
    \text{Welf}^{\text{ALG}}
    &\geq \frac{1}{2} \E R(\emptyset) + \frac{1}{2} \E R(A_n) - \E \sum_{i \in V} \tau_i . \label{eqn:mp-halfway}
  \end{align}

  We now claim the following.
  \begin{lemma} \label{lemma:kw12}
    $\E \sum_{i \in V} \tau_i \leq \frac{1}{2} \E R(A_n)$.
  \end{lemma}
  \begin{proof}
    Let $(F',Z')$ be drawn as an independent copy of $(F,Z)$, and let and $R'$ be the corresponding marginal gain function. Then
    \begin{align*}
      \E \sum_{i \in V} \tau_i = \E_{X,F} \sum_{i \in V} \frac{1}{2} \E_{F'} \left[ R'(A_{i-1}) - R'(A_{i-1} \cup \{i\}) \right] .
   \end{align*}
    Fix any realizations of $X,F,F'$.
    Proposition 2 of \citet{kleinberg2012matroid} gives the following: for every matroid $\F$, for every list of $n$ numbers $(z_1,\dots,z_n)$, and for every pair of disjoint sets $A,V$ with $A \cup V \in \F$, letting $A_i = A \cap [i]$, defining $F^* = \max_{S \in \F} \sum_{i \in S} z_i$ and $R'(S) = \max_{S' \subseteq F^*} \left\{ \sum_{i \in S'} z_i ~:~ S' \cap A = \emptyset, S' \cup A \in \F \right\}$, we have
    \begin{align*}
       \sum_{i \in V} \left[ R'(A_{i-1}) - R'(A_{i-1} \cup \{i\}) \right] &\leq R(A_n) .  & \text{Proposition 2 of \citet{kleinberg2012matroid}}
    \end{align*}
    Applying this to our $V, Z', R', A$, and taking expectations,
    \begin{align*}
      \frac{1}{2} \E \sum_{i \in V} \left[ R'(A_{i-1}) - R'(A_{i-1} \cup \{i\}) \right]
     &\leq \frac{1}{2} \E R'(A_n) .
  \end{align*}
  \end{proof}
  Plugging Lemma \ref{lemma:kw12} into (\ref{eqn:mp-halfway}), we obtain:
  \begin{align*}
    \text{Welf}^{\text{ALG}}
    &\geq \frac{1}{2} \E R(\emptyset) + \frac{1}{2} \E R(A_n) - \frac{1}{2} \E R(A_n) \\
    &= \frac{1}{2} \E R(\emptyset) \\
    &= \frac{1}{2} \E \sum_{i \in F} Z_i  \\
    &= \frac{1}{2} \sum_{i=1}^n Q_i \hat{z}_i  \\
    &= \frac{1}{2} \sum_{i=1}^n \hat{U}_i .
  \end{align*}
\end{proof}

\bo{ideally: here or below the next proof, we would state a proposition that we have a $1/2-\epsilon$ approx for CORS efficiently, and in fact for $\epsilon=0$ by adding in the lemma that we can compute ex-ante opt efficiently for CORS.}

We can now prove Proposition \ref{prop:matroid-prophet}, that Algorithm \ref{alg:matroid-prophet} (\mp), run on a \cms{} instance and $(Q,\hat{U})$, is efficient and provides an approximation guarantee relative to $\hat{U}$.
\begin{proof}[Proof of Proposition \ref{prop:matroid-prophet}]
  Efficiency is Lemma \ref{lemma:matroid-prophet-efficient}.

  Let $\I$ be an instance of $\cms$, let $\I' = \MSPtoCOB(\I)$, and let $\I'' = \COBtoCOR(\I')$.
  Let $(Q,\hat{U})$ be an $\epsilon$-approximate collection for $\I$.
  Then it is also an $\epsilon$-approximate collection for $\I''$, by the reductions of Lemmas \ref{lemma:alg-cobs-cms} and \ref{lemma:alg-cors-cobs}. \bo{more detail would help}

  Given $(Q,\hat{U})$, let $\ALG$ be the $\cms$ version of $\mp$ run on $\I$, while $\ALG''$ is the $\cors$ version of $\mp$ run on $\I''$.
  Then $\E \Welfofon{\ALG}{\I} = \E \Welfofon{\ALG''}{\I''}$, because for each arrival $i$ and threshold $\tau_i$, $\SAUP(i,\tau_i)$ has the same expected performance and probability of claiming.
  \bo{following line is for arxiv only, not camera-ready}
  \bo{This whole proof could benefit from a rewrite for journal version}
  In fact, both obtain performance $\max_{\pi} \E[(\kappa_i^{\pi} - \tau_i)^+]$, where $\kappa_i^{\pi}$ is the capped value of the bandit that mimics following policy $\pi$ on $\M_i$, and the maximum is over all deterministic policies; see Proposition \ref{prop:weiztman-stuff} for details.
  \bo{end: arxiv-only}
  Lemma \ref{lemma:matroid-prophet-approx} proves $\E \Welfofon{\ALG''}{\I''} \geq \tfrac{1}{2}\sum_{i=1}^n \hat{U}_i$, proving the claim.
\end{proof}

\section{Hardness of \cms{}} \label{app:hardness}

\elias{TODO: example of non-obligatory as CMS, maybe including figure, other problems that we can express as CMS (redo appendix title accordingly)}

The problem of Pandora's Non-Obligatory Inspection, introduced by \citet{doval2018whether}, can be naturally expressed as a special case of \cms.
In the \emph{Pandora's Non-Obligatory Inspection} (\pnoi{}) problem, an algorithm is presented with $n$ boxes $\{1,\ldots,n\}$.
Box $i$ contains a value $v_i$ distributed according to a distribution $D_i$, and has an associated inspection cost $c_i$ to open.
At each step, the algorithm may open some box $i$, paying its cost $c_i$ and revealing its value $v_i$, drawn independently from $D_i$,
claim the value of a box that has previously been opened and halt, or claim the value in an unopened box $i$ without paying the inspection cost $c_i$ and halt.
We denote the welfare of an algorithm $\ALG$ for an instance $\I$ of \pnoi{} by $\Welf^{\ALG}(\I)$, in keeping with our other notation.
Any \pnoi{} instance in which every $D_i$ has finite support can be represented as a special case of \cms, with one small subtlety regarding claiming an unopened box. 

Let $\F$ be the rank-one matroid constraint $\mathcal{F}=\{\emptyset,\{1\},\{2\},\ldots,\{n\}\}$. That is, we may only select a single item.
Given an instance $\I$ of \pnoi{}, we construct an instance $\I'$ of $\cms(\F)$ as follows.

For each box $i$, construct a MSP $\M_i$ with two legal actions from the start state $s_i^*$, ``inspect'' and ``take''.
Create terminal states $t_i$ and $s_i^1, \dots, s_i^{d_i}$ for each of the $d_i$ possible values in $\text{Supp}(D_i)$. 
The ``inspect'' action has cost $c_i$ and transitions to a terminal state $s_i^k$ with probability $\Pr[D_i=v_i^k]$, for every $v_i^k\in\text{Supp}(D_i)$.
The ``take'' action deterministically transitions to state $t_i$ with value $\E[D_i]$ and incurs no cost.
Note that because the ``take'' action is free and its transition function is deterministic, WLOG, any algorithm that chooses action ``take'' on some $\M_i$ immediately claims $\M_i$.

\citet{fu2023pandora} showed that computing the optimal policy for \pnoi{} is NP-hard, and specifically that computing optimal policies on a subclass of ``low-cost-low-return-support-3" (LCLRS3) problems is NP-hard. 
Crucially, this set of instances only uses boxes with three possible realizations, allowing construction of a \cms{} instance of size polynomial in the number of boxes of the \pnoi{} instance. 
It follows that \cms{} is NP-hard.
\elias{make more formal? Theorem/proof environment? Alternatively make \emph{less} formal especially if we list other problems}

\section{Efficiency of the Inefficient Reduction} \label{app:bonus}

In Section \ref{sec:inefficient}, we showed that any ex-ante prophet inequality translates to an online approximation algorithm for $\cms(\F)$.
However, the latter is not generally computationally efficient.
In this section, we provide more detail on which steps of the reduction can be made efficient generically.

As summarized in Figure \ref{fig:results-overview}, we use a sequence of reductions:
\begin{itemize}
  \item In Proposition \ref{prop:cms-to-cobs}, we reduce $\cms(\F)$ to Choice-Over-Bandits Selection ($\cobs(\F)$). In general, this reduction is not efficient because the size of the instance is increased exponentially: for each $\msp$ with a DAG state graph, we create a choice-over-bandits process with one choice for each path in the original DAG. \bo{However, if it's a tree, then I guess everything is computationally efficient?}
  \item In Proposition \ref{prop:cobs-to-cors}, we reduce $\cobs(\F)$ to Choice-Over-Rewards Selection ($\cors(\F)$). It is almost immediate that this reduction is efficient (as shown below).
  \item In Proposition \ref{prop:cors-to-prophets}, we reduce $\cors(\F)$ to a classic ex-ante prophet inequality with constraint $\F$. Nontrivially, this reduction turns out to be efficient for certain classes of $\F$.
\end{itemize}

\bo{Update 2026-03-06: explained monotone below.}
Therefore, we will show the following result.
Here a prophet inequality algorithm is \emph{monotone}\footnote{For any non-monotone prophet inequality, there exists a monotone one with the same guarantee. We do not know of any cases where a non-monotone but computationally efficient algorithm is known, yet a monotone computationally efficient algorithm with the same performance is unknown. It seems unlikely, but we have not ruled out the possibility for all feasibility structures.} if, when it observes a value and then claims it, it would also have claimed any strictly larger value; see e.g. \citet{kleinberg2012matroid}.
\begin{proposition}
  If the polytope $\PF$ admits a polynomial-time separation oracle, and there is a polynomial-time monotone ex-ante $\alpha$ prophet inequality for constraint $\F$, then there is a polynomial-time online $\alpha$-approximation algorithm for Choice-Over-Bandits Selection ($\cobs(\F)$).
\end{proposition}
\begin{proof}
  Suppose we have a monotone polynomial-time ex-ante $\alpha$ prophet inequality.
  
  First, in Proposition \ref{prop:cors-prophets-efficient} below, we show that there is a polynomial-time algorithm for computing the ex-ante optimal algorithm for $\cors(\F)$ and in particular the associated probability distribution $\lambda_i$ from which to pick a drawer on each arriving ``cabinet'' $i$.
  We now observe that, given $(\lambda_1,\dots,\lambda_n)$, our reduction in Proposition \ref{prop:cors-to-prophets} is immediately computationally efficient, as it simply opens a drawer according to $\lambda_i$ and feeds the observed value to the prophet-inequality algorithm.
  This completes the efficient alogrithm for $\cors(\F)$.

  \bo{Do we need monotonicity?}
  We now produce an algorithm for $\cobs(\F)$.
  First, following Observation \ref{obs:quantile-based}, we can make the monotone algorithm for $\cors(\F)$ quantile-based.
  We can furthermore do so efficiently: we simply compute the probability $q_i^j$ with which $i$ is claimed conditioned on drawer $j$ being opened, and modify the algorithm to claim iff $X_i^j$ is in its top $q_i^j$ quantile (see Observation \ref{obs:quantile-based}). \bo{possibly tricky point: these are not the same $q_i^j$ as in the ex-ante opt algorithm below; these come from the prophet inequality algorithm}

  Now, for $\cobs(\F)$, we simply check that the reduction of Proposition \ref{prop:cobs-to-cors} is already efficient.\footnote{The reduction requires the $\cors(\F)$ algorithm to be quantile-based, which we have ensured above.}
  When $i$ arrives, we draw $j \sim \lambda_i$ computed by the $\cors(\F)$ algorithm above.
  Then, we simply claim bandit process $B_i^j$ if its capped Weitzman index $\kappa_i^j$ is in its top $q_i^j$ quantile.
  As discussed in Proposition \ref{prop:cobs-to-cors}, this is achievable by simply advancing the bandit until the Weitzman index drops below a fixed threshold (possibly with random tiebreaking), claiming iff the index remains above.
\end{proof}
  
We now complete this section by stating and proving Proposition \ref{prop:cors-prophets-efficient}.
In particular, we compute $\Lambda=(\lambda_1,\ldots,\lambda_n)$, the set of distributions from which ex-ante $\OPT$ select a ``drawer'' from each ``cabinet''.
This turns out to be a convex programming problem over the polytope $\PF$.
Therefore, we can efficiently calculate $\Lambda$ and the ex-ante $\OPT$ policy for \cors{} for downward-closed constraints such as matroids where there exists an efficient separation oracle.

\begin{proposition} \label{prop:cors-prophets-efficient}
  Let $\F$ be downward-closed such that there exists a polynomial-time separation oracle for $\PF$.
  Given an instance $\I$, let $\lambda_i$ be the distribution from which the ex-ante optimal algorithm $\OPT$ for $\I$ selects a drawer $j^*(i)$ on \cor{} $\C_i$.
  There is a polynomial-time algorithm that, given $\I$, computes the ex-ante $\OPT$ algorithm and in particular computes $(\lambda_1, \ldots, \lambda_n)$.
\end{proposition}

\begin{proof}
By Observation \ref{obs:ex-ante-indep}, ex-ante $\OPT$ WLOG executes an independent policy $\rho_i$ on each $\C_i$.
By Observation \ref{obs:quantile-based}, each $\rho_i$ is WLOG quantile-based, selecting $X_i^j$ if and only if its realization is in the top $q_i^j$ quantile
of its distribution for some $q_i^j\in[0,1]$.
Define $g_i^j(q) := q \E[ X_i^j \mid \text{$X_i^j$ in top $q$ quantile} ]$, the ``upper expectation'' of $X_i^j$ for quantile $q$.
We first claim that $g_i^j$ is a concave function (this fact is also asserted in \citet{feldman2016online}, Section 4.1).
To see this, let $G(q) = x$ such that\footnote{If there are multiple $x$ where this is true, they lie in a region where $X_i^j$ has zero density and any choice of $x$ in this region is valid for the argument.} $\Pr[X_i^j \leq x] = 1-q$.
In other words, $G(q) = F^{-1}(1-q)$ where $F$ is the CDF.
We have $g_i^j(q) = \int_{y=q}^1 dG(y)$; this is a version of the fact $\E[X_i^j] = \int_0^{\infty} (1 - F(x)) dx$.
And $G$ is a positive, monotone decreasing function, so (noting the bounds on the integral) $g_i^j$ is concave.

Now, given a fixed $Q_i$, the optimal algorithm on cabinet $i$ that claims it independently with probability $Q_i$ solves
\[ f_i(Q_i) = \max_{\lambda_i \in \Delta_{[m_i]}, (q_i^j : j \in [m_i])} \sum_{j \in [m_i]} \lambda_i(j) g_i^j(q_i^j) , \]
subject to $\sum_{j \in [m_i]} \lambda_i(j) q_i^j = Q_i$.
We can rephrase.
Let $\beta_i \in \Delta_{[0,1]}$ be a distribution over points $q \in [0,1]$ satisfying that $\E_{q \sim \beta_i} q = Q_i$.
Define
\[ \hat{f}_i(Q_i) = \max_{\beta_i \in \Delta_{[0,1]}} \E_{q \sim \beta_i} \max_{j \in [m_i]} g_i^j(q) . \]
We claim that $\hat{f}_i = f_i$.
We have $\hat{f}_i(q) \geq f_i(q)$ because any valid solution $(\lambda_i, \{q_i^j\})$ can be rephrased as a distribution $\beta_i$.
But because each $g_i^j$ is concave, WLOG, $\beta_i$ is never supported on multiple points where the same $g_i^j$ achieves the maximum; it is better by Jensen's inequality to put all their weight on the average of the points.
So the solution $\beta_i$ is WLOG a distribution over $m_i$ points $\{q_i^j\}$ where $j = \argmax_{j'} g_i^{j'}(q_i^j)$, so $f_i$ can achieve the same value as $\hat{f}_i$.
In fact, $\hat{f}_i$ is just the concave envelope of $(g_i^j : j \in [m_i])$, i.e. the pointwise smallest concave function that lies weakly above every $g_i^j$.

Given $Q_i$, $\hat{f}_i(Q_i)$ is efficiently computable: it is a concave maximization problem over the linear constraint that the expectation of $\beta_i$ is $Q_i$.
The solution $\beta_i$ produces $\lambda_i, \{q_i^j\}$.
We also note that the gradient of $f_i$ at $Q_i$ is computable efficiently as well.

$\hat{f}_i(Q_i)$ is also decreasing, as it represents the maximum value obtainable by selecting with probability $Q_i$.
This implies that the function $h(Q_i) := Q_i f_i(Q_i)$ is concave, as we can make the same argument as with $g_i^j$.
The ex-ante relaxation  problem is
\[ \max_{Q \in \PF} \sum_i Q_i f_i(Q_i) , \]
a concave maximization problem subject to a polytope constraint.
If $\PF$ has an efficient separation oracle, it is efficiently solvable to find $Q$.
Given $Q$, we have already shown that $\lambda_i$ is efficiently computable.
\end{proof}

\section{Weitzman Index Interpretation of \SAUP{}}\label{sec:weitzman-interpretation}

The efficient \SAUP{} algorithm described in Proposition \ref{prop:saup} can be nicely characterized in the language of the Weitzman indices we use for bandit processes, discussed in Section \ref{subsec:bandits}.
We first develop some notation.
Given a MSP $\M$ and a state $s$, consider the \msp{} $\M'$ that we get by taking $s$ to be our start state.
Given a deterministic policy $\pi$ on $\M'$, let $B^{\pi}$ be the bandit process induced by $\pi$.
Define $\sigma_s^{\pi}$ to be the Weitzman index $\sigma_{s^*}$ of the start state $s^*$ of $B^{\pi}$,
and $\kappa_s^{\pi}$ to be $\kappa_{s^*}$.

We prove that from any starting state $s$ in $\M$, the value achieved by the algorithm in Proposition \ref{prop:saup}, is exactly equal to a Weitzman amortization of the costs.

\begin{proposition} \label{prop:weiztman-stuff}
For any $\M'$ rooted at state $s$, 
$\E[\Perf^{\pi^*}(\M') - \A^{\pi^*}\tau] = \E[(\kappa_s^{\pi^*}-\tau)^+] = \max_\pi \E[(\kappa_s^{\pi}-\tau)^+]$. 
\end{proposition}

\begin{proof}
\robin{changed}
We will prove via induction that $\Val^{\pi^*,s} \max_\pi \E[(\kappa_s^{\pi}-\tau)^+]$. 
For $B^{\pi^*}$, the bandit induced by $\pi^*$, it immediately follows that $\Val^{\pi^*,s} = \E[(\kappa_s^{\pi^*}-\tau)^+]$. 
We will finally argue that $\E[\Perf^{\pi^*}(\M') - \A^{\pi^*}\tau] = \E[(\kappa_s^{\pi^*}-\tau)^+]$, completing the proof. 
\robin{end change} 

We proceed with the inductive portion. 
For a terminal state $s$, $\Val^{\pi^*,s}=(V(s)-\tau)^+ = \max_\pi \E[(\kappa_s^{\pi}-\tau)^+]$.
For a nonterminal state $s$, assume as our inductive hypothesis that $\Val^{\pi^*,n(s)} = \max_\pi \E[(\kappa_{n(s)}^{\pi}-\tau)^+]$ for all possible successors $n(s)$.
By definition of our algorithm, 
\begin{align*}
	\Val^{\pi^*,s} &= \max_a v^{\pi^*}(a; s)\\
	&= \max_a \E_{s' \sim p_{a,s}}[\Val^{\pi^*,s'}] - C(a,s)\\
	&= \max_a \max_\pi \E_{s' \sim p_{a,s}}[(\kappa_{n(s)}^{\pi}-\tau)^+] - C(a,s)
\end{align*}
by the inductive hypothesis. 

We define $\sigma_{a; s}^{\pi}$ as the Weitzman index for the bandit process starting at $s$ and taking action $a$ and then following policy $\pi$, referring to the Bandit process notation of Section \ref{subsec:bandits}.
Recall by the definition of the index $\sigma$ that $\E[(\kappa_{n(s)}^{\pi}-\sigma_{a; s}^{\pi})^+] = C(a, s)$ for any policy $\pi$. 
Noting that $C(a,s)$ is a constant for any action $a$, we can rewrite $\Val^{\pi^*,s}$ replacing the cost term with a constant function of the policy $\pi$:
\begin{align*}
	\Val^{\pi^*,s} &= \max_a \max_\pi \left(\E_{s' \sim p_{a,s}}[(\kappa_{s'}^{\pi}-\tau)^+] - \E[(\kappa_{s'}^{\pi}-\sigma_{a; s}^{\pi})^+]\right),
\end{align*}
with the maximum over $\pi$ now being taken over both terms in the expression.
We rewrite with $\pi^*$ and $a^*$ the maximizing policy and action chosen as $s$, respectively, to get
\begin{align}
	\Val^{\pi^*,s} &= \E_{s' \sim p_{a,s}}[(\kappa_{s'}^{\pi^*}-\tau)^+ - (\kappa_{s'}^{\pi^*}-\sigma_{a^*; s}^{\pi^*})^+]. \label{eq:val-pandora}
\end{align}

We consider two cases here: $\tau \geq \sigma_{a^*; s}^{\pi^*}$ and $\tau < \sigma_{a^*; s}^{\pi^*}$. 
In the first case, in the expectation of equation (\ref{eq:val-pandora}) the term $(\kappa_{s'}^{\pi^*}-\tau)^+$ will always be at most $(\kappa_{s'}^{\pi^*}-\sigma_{a^*; s}^{\pi^*})^+$, giving $\Val^{\pi^*,s}\leq 0$. 
By construction of the algorithm, if the best action would produce $\Val^{\pi^*,s} < 0$ it simply halts, giving $\Val^{\pi^*,s} = 0 = \Val^{\pi^*,s} = \E_{s' \sim p_{a,s}}[(\min(\sigma_{a^*; s}^{\pi^*}, \kappa_{s'}^{\pi^*})-\tau)^+] = \E_{s' \sim p_{a,s}}[(\kappa_{s}^{\pi^*}-\tau)^+]$. 

When $\tau < \sigma_{a; s}^{\pi^*}$, we analyze the interior of the expectation pointwise. 
Consider the realization of $\kappa_{s'}^{\pi^*}$ relative to $\sigma_{a^*; s}^{\pi^*}$. 
If $\kappa_{s'}^{\pi^*} < \sigma_{a^*; s}^{\pi^*}$, then 
\begin{align*}
(\kappa_{s'}^{\pi^*}-\tau)^+ - (\kappa_{s'}^{\pi^*}-\sigma_{a^*; s}^{\pi^*})^+ &= (\kappa_{s'}^{\pi^*}-\tau)^+\\
&= (\kappa_{s}^{\pi^*}-\tau)^+, 
\end{align*} 
with the last equality following the definition of $\kappa_{s}^{\pi^*} = \min(\sigma_{a^*; s}^{\pi^*}, \kappa_{s'}^{\pi^*})$. 
Otherwise, $\tau < \sigma_{a^*; s}^{\pi^*} \leq \kappa_{s'}^{\pi^*}$ and 
\begin{align*}
(\kappa_{s'}^{\pi^*}-\tau)^+ - (\kappa_{s'}^{\pi^*}-\sigma_{a^*; s}^{\pi^*})^+ &= \kappa_{s'}^{\pi^*}- \tau - \kappa_{s'}^{\pi^*} + \sigma_{a^*; s}^{\pi^*}\\
& = \sigma_{a^*; s}^{\pi^*} - \tau\\
& = (\sigma_{a^*; s}^{\pi^*} - \tau)^+ \\
& = (\kappa_{s}^{\pi^*} - \tau)^+,
\end{align*}
again following the definition of $\kappa_{s}^{\pi^*} = \min(\sigma_{a^*; s}^{\pi^*}, \kappa_{s'}^{\pi^*})$.
Therefore, when $\tau < \sigma_{a^*; s}^{\pi^*}$ we have $\Val^{\pi^*,s} = \E_{s' \sim p_{a,s}}[(\kappa_{s}^{\pi^*} - \tau)^+] = \max_{\pi} \E[(\kappa_{s}^{\pi} - \tau)^+]$ pointwise for all realizations.

\robin{changed}
We now argue that $\E[\Perf^{\pi^*}(\M') - \tau\A^{\pi^*}] = \E[(\kappa_s^{\pi^*}-\tau)^+]$ by claiming that $\pi^*$ is \emph{non-exposed} (Definition \ref{def:exposed}).
In the structure of the proof above, we consider two cases for $\tau$ vs. $\sigma_{a^*; s}^{\pi^*}$ at any state $s$. 
When $\sigma_{s}^{\pi^*}\leq \tau$, the policy halts.
Otherwise, the policy continues. 
This means that $\pi^*$ stops in state $s$ the first time $\sigma_{s}^{\pi^*}\leq \tau$, and in all previous states $s'$, $\sigma_{s'}^{\pi^*} > \sigma_{s}^{\pi^*}$. 
The policy is therefore non-exposed, and by Lemma \ref{lem:magic}, $\E[\Perf^{\pi^*}(\M')] = \E[\Perf^{\pi^*}(B^{\pi^*})] = \E[\A^{\pi^*}\kappa_{s}^{\pi^*}]$. 
Thus, $\E[\Perf^{\pi^*}(\M') - \tau\A^{\pi^*}] = \E[\A^{\pi^*}\kappa_s^{\pi^*} - \A^{\pi^*}\tau] = \E[\A^{\pi^*}(\kappa_s^{\pi^*} - \tau)] = \E[(\kappa_s^{\pi^*} - \tau)^+]$, as $\pi^*$ only claims if every intermediate $\sigma_{s'}^{\pi^*}$ is at least $\tau$.
\robin{end change}
\end{proof}

We highlight one fact used within the above proof: the policy $\pi^*$ is non-exposed.

\end{document}